\pgfplotsset{compat=1.6}
\tikzset{external/only named=true}
\definecolor{Spectral-4-1}{RGB}{215,25,28}
\definecolor{Spectral-4-C}{RGB}{215,25,28}
\definecolor{Spectral-4-2}{RGB}{253,174,97}
\definecolor{Spectral-4-F}{RGB}{253,174,97}
\definecolor{Spectral-4-3}{RGB}{171,221,164}
\definecolor{Spectral-4-J}{RGB}{171,221,164}
\definecolor{Spectral-4-4}{RGB}{43,131,186}
\definecolor{Spectral-4-M}{RGB}{43,131,186}
\definecolor{Spectral-PRG-1}{RGB}{202,0,32}
\definecolor{Spectral-PRG-3}{RGB}{244,165,130}
\definecolor{Spectral-PRG-2}{RGB}{186,186,186}
\definecolor{Spectral-PRG-4}{RGB}{64,64,64}
\newtheorem{theorem}{Theorem}
\newtheorem{lemma}[theorem]{Lemma}
\theoremstyle{definition}
\newtheorem{definition}[theorem]{Definition}
\gdef\dash---{\thinspace---\hskip.16667em\relax} 
\gdef\op|{\,|\;}
\newcommand{\ident}[1]{{\textit{#1}\rule{0cm}{1ex}}}
\newcommand{\figref}[1]{Fig.~\ref{#1}}
\newcommand{\bfno}[1]{\noindent{\bf #1}}
\newcommand{\IR}{\mathds{R}}
\newcommand{\dup}{\delta_\uparrow}
\newcommand{\ddo}{\delta_\downarrow}
\newcommand{\deltamin}{\delta_{\ident{min}}}
\newcommand{\Deltado}{\overline{\Delta}}
\newcommand{\etamodel}{$\boldsymbol{\eta}$-involution model}
\newcommand{\etachannel}{$\boldsymbol{\eta}$-involution channel}
\newcommand{\boeta}{\boldsymbol{\eta}}
\tikzstyle{binary place}=[place,circle, double]
\tikzstyle{node}=[circle,draw=black,thick,minimum size=9mm]
\tikzstyle{dest}=[circle,draw=black!50,fill=black!20,thick,minimum
\tikzstyle{post}=[->,thick]
\tikzstyle{pre}=[<-,thick]
\tikzstyle{every transition}=[fill,minimum width=1cm,minimum height=2mm]
\tikzstyle{Atransition}=[transition,fill,minimum width=1cm,minimum height=2mm]
\tikzstyle{Otransition}=[transition,fill=white,minimum width=1cm,minimum height=2mm]
\tikzstyle{THtransition}=[transition,fill=white,minimum width=4mm,minimum height=1cm]
\tikzstyle{Tdelay} = [draw, rectangle, rounded corners,
\tikzstyle{Tfunction} = [draw, rectangle,
\tikzstyle{Tsignal} = [draw,fill=black,circle, size=1mm]
\tikzstyle{ra} = [draw,thick,double,double distance=1.0pt,->]
\tikzstyle{r} = [draw,->,line width=0.5pt]
\title{A Faithful Binary Circuit Model with\\Adversarial Noise}
\author{
  \IEEEauthorblockN{
    Matthias F{\"u}gger\IEEEauthorrefmark{1},
    J\"urgen Maier\IEEEauthorrefmark{2}
    \begin{minipage}[c]{1em}
      \href{https://orcid.org/0000-0002-0965-5746}{{\includegraphics[width=1em]{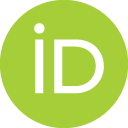}}}
    \end{minipage}\,,
    Robert Najvirt\IEEEauthorrefmark{2},
    Thomas Nowak\IEEEauthorrefmark{3}
    \begin{minipage}[c]{1em}
      \href{https://orcid.org/0000-0003-1690-9342}{{\includegraphics[width=1em]{orcID.png}}}
    \end{minipage}\,,
    Ulrich Schmid\IEEEauthorrefmark{2} 
    \begin{minipage}[c]{1em}
      \href{https://orcid.org/0000-0001-9831-8583}{{\includegraphics[width=1em]{orcID.png}}}
    \end{minipage}
  }
  \IEEEauthorblockA{\IEEEauthorrefmark{1}CNRS \& LSV, ENS Paris-Saclay}
  \IEEEauthorblockA{\IEEEauthorrefmark{2}Technische Universit\"at Wien}
  \IEEEauthorblockA{\IEEEauthorrefmark{3}Universit\'e Paris-Sud}
  \thanks{This research was supported by the
  FATAL (grant P21694) and SIC project (grant \mbox{P26436-N30}) of the Austrian
  Science Fund (FWF).}
}
\begin{document}
\maketitle%.

\begin{textblock*}{\textwidth}(0mm,189mm)%
  \footnotesize%
  This is the unedited Author’s version of a Submitted Work that was
  subsequently accepted for publication at 2018 Design, Automation Test in
  Europe Conference Exhibition (DATE).
\end{textblock*}%
\vspace*{-1em}

%\vspace*{-1.3cm}

\begin{abstract}%.

Accurate delay models are important for static and dynamic 
timing analysis of digital circuits, and mandatory for formal verification. However, F\"ugger et
al.\ [IEEE TC 2016] proved that pure and inertial delays, which are employed
for dynamic timing analysis in state-of-the-art tools like ModelSim,
NC-Sim and VCS, do not yield faithful digital circuit models.
Involution delays, which are based on delay functions
that are mathematical involutions depending on the previous-output-to-input 
time offset, were introduced by F\"ugger et al.\ [DATE'15] as a faithful alternative 
(that can easily be used with existing tools). Although involution
delays were shown to predict real signal traces reasonably accurately, 
any model with a deterministic delay function is naturally limited in its modeling
power. 

In this paper, we thus extend the involution model, by adding
non-deterministic delay variations (random or even adversarial), and prove
analytically that faithfulness is not impaired by this generalization. Albeit
the amount of non-determinism must be considerably restricted to ensure this
property, the result is surprising: the involution model differs from
non-faithful models mainly in handling fast glitch trains, where small delay
shifts have large effects. This originally suggested that adding even small
variations should break the faithfulness of the model, which turned out not to
be the case. Moreover, the results of our simulations also confirm that this
generalized involution model has larger modeling power and, hence,
applicability.

\end{abstract}

\section{Introduction}

Modern digital circuit design relies heavily on fast functional simulation tools
like Cadence NC-Sim, Mentor Graphics ModelSim or Synopsis VCS, which also allow
dynamic timing validation using suitable delay models.  In fact, for modern VLSI
technologies with their switching times in the picosecond range, static timing
analysis may not be sufficient for critical parts of a circuit, where e.g.\ the
presence of glitch trains may severely affect correctness and power
consumption. Fully-fledged analog simulations, on the other hand, are often too
costly in terms of simulation time.

Delay models like CCSM~\cite{Syn:CCSM} and ECSM~\cite{Cad:ECSM} used in
gate-level timing analysis tools make use of elaborate characterization
techniques, which incorporate technology-dependent information like driving
strengths of a gate for a wide range of voltages and load capacitances.  Based
on these data, dynamic timing analysis tools compute the delay for each gate and
wire in a specific circuit, which is then used to parametrize pure and/or
inertial delay \emph{channels} (i.e., model components representing delays).
Recall that pure delay channels model a constant transport delay, whereas
inertial delay channels \cite{Ung71} allow an input transition to proceed to its
output only if there is no subsequent (opposite) input transition within some
time window $\Delta>0$. Subsequent simulation and dynamic timing analysis runs
use these pre-computed delays as \emph{constants}, i.e., they are not
reevaluated at every point in time.

More accurate simulation and dynamic timing analysis results can be achieved by
the \emph{Degradation Delay Model (DDM)}, introduced by Bellido-D\'iaz
et~al.~\cite{BDJCAVH00,BJV06}, which allows channel delays to vary and covers
\emph{gradual} pulse cancellation effects.

F\"ugger et al.~\cite{FNS16:ToC} investigated the \emph{faithfulness} of digital
circuit models, i.e., whether a problem solvable in the model can be solved with
a real physical circuit and vice versa. Unfortunately, however, they proved that
none of the existing models is faithful: for the simple \emph{Short-Pulse
  Filtration} (SPF) problem, which resembles a one-shot variant of an inertial
delay channel, they showed that every model based on \emph{bounded
  single-history channels} (see below for the definition) either contradicts the
unsolvability of SPF in bounded time or the solvability of SPF in unbounded time
by physical circuits~\cite{Mar81}.

Single-history channels allow the input-to-output delay for a given input
transition to depend on the time of the \emph{previous} output transition.
Formally, a single-history channel is defined by a \emph{delay function}
$\delta: \IR \to \IR$, where $\delta(T)$ determines the delay of an input
transition at time~$t$, given that the previous output transition occurred at
time~$t-T$. \figref{fig:shc} depicts the involved parameters.  Note that $T$ and
$\delta(T)$ are potentially negative in the case of a short input pulse, where a
new input transition occurs earlier than the just scheduled previous output
transition.  Together with the rule that non-FIFO transitions cancel each other,
this allows to model attenuation and even suppression of glitches.
\figref{fig:pulsetrain_symb} shows an example input/output-trace generated by a
single-history channel. Note that, for \emph{bounded} single-history channels,
$\delta(T)$ cannot point arbitrarily far back into the past.

\begin{figure}
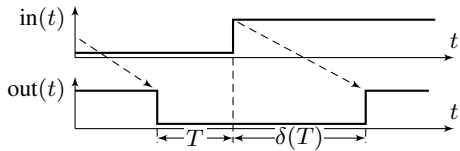

    \centering
    \tikzfigurechannelintro
    \caption{\small\em
Input/output signal of single-history channel, involving the
    previous-output-to-input delay $T$ and input-to-output delay
    $\delta(T)$.}\label{fig:shc}
\vspace{-0.3cm}
\end{figure}

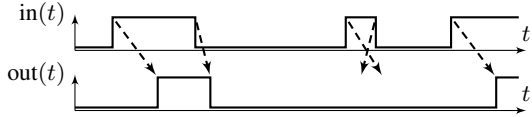
\begin{figure}
\centering
\begin{tikzpicture}[>=latex',yscale=0.8,xscale=1]
% coordinates
\draw[->] (0,0) -- ++(6,0) node[above] {\small $t$};
\draw[->] (0,0) -- ++(0,0.6) node[left,xshift=0pt] {\small $\text{in}(t)$};

% in 
\draw[thick] (0,0.05) -- ++(0.5,0) -- ++(0,0.5)
                      -- ++(1.1,0) -- ++(0,-0.5)
                      -- ++(2,0) -- ++(0,0.5)
                      -- ++(0.4,0) -- ++(0,-0.5)
                      -- ++(1.0,0) -- ++(0,0.5) -- ++(0.9,0);

% out
\draw[->] (0,-1) -- ++(6,0) node[above] {\small $t$};
\draw[->] (0,-1) -- ++(0,0.6) node[left,xshift=0pt] {\small $\text{out}(t)$};

\draw[thick] (0,-0.95) -- ++(1.1,0) -- ++(0,0.5) % delay 0.6
                      -- ++(0.7,0) -- ++(0,-0.5)    % delay 0.2
                      -- ++(3.8,0) -- ++(0,0.5) -- ++(0.3,0); % delay 0.3

% delays
\draw[densely dashed,->,shorten >=1pt,shorten <=2pt,thick] (0,0.05)++(0.5,0.5) -- (1.1,-0.45);                 
\draw[densely dashed,->,shorten >=1pt,shorten <=2pt,thick] (0,0.05)++(1.6,0.5) -- (1.8,-0.45);    

\draw[densely dashed,->,shorten >=1pt,shorten <=2pt,thick] (0,0.05)++(3.6,0.5) -- (4.1,-0.45);                 
\draw[densely dashed,->,shorten >=1pt,shorten <=2pt,thick] (0,0.05)++(4,0.5) -- (3.8,-0.45);

\draw[densely dashed,->,shorten >=1pt,shorten <=2pt,thick] (0,0.05)++(5,0.5) -- (5.6,-0.45);

\end{tikzpicture}
\caption{\small\em Single-history channels allow to model pulse attenuation:
  The delay $\delta(T)$ becomes smaller with smaller previous-output-to-input 
time $T$. Observe the cancellation of the second pulse due to non-FIFO-scheduled
  output transitions.}\label{fig:pulsetrain_symb}
\vspace{-0.4cm}
\end{figure}

In \cite{FNNS15:DATE}, F\"ugger et~al.\ introduced an \emph{unbounded}
single-history channel model based on \emph{involution channels}, which use a
delay function $\delta(T)$ whose negative is self-inverse, i.e., fulfills the
involution property $-\delta(-\delta(T))=T$. They proved that, in sharp contrast
to bounded single-history channels, SPF cannot be solved in bounded time with
involution channels, whereas it is easy to provide an unbounded SPF
implementation, which is in accordance with real physical circuits~\cite{Mar81}.
Hence, binary-valued circuit models based on involution channels are faithful
with respect to the SPF problem. We note that this actually implies faithfulness
also w.r.t.\ other, practically more relevant problems: analogous to
\cite{BJ83}, it is possible to implement a one-shot version of a latch (that
allows a single up- and a single down-transition of the enable input) using a
circuit solving SPF, and vice versa. Consequently, the involution model is also
faithful for one-shot latches.  Moreover, in \cite{NFNS15:GLSVLSI}, Najvirt et
al.\ used both measurements and Spice simulations to show that the involution
model can also be made reasonably accurate by suitable parametrization, in the
sense that it nicely (though not perfectly) predicts the actual glitch
propagation behavior of a real circuit, namely, an inverter chain.

As it is easy to replace the standard pure or inertial delays 
currently used in VITAL or Verilog models by involution delays,
the model is not only a promising starting point for sound 
formal verification, but also allows to seamlessly improve existing
dynamic timing analysis tools.

%\medskip
%\smallskip

\noindent
\textbf{Main contributions:} Notwithstanding its superiority with respect to
faithfulness, like every \emph{deterministic} delay model, the involution model
has limited modeling power: many different effects in physical circuits cause
various types of noise in signal waveforms and, hence, \emph{jitter} in the
digital abstraction \cite{CR16:arxiv}.  No deterministic delay function can
properly capture the resulting variability in the signal traces.

In this paper, we relax the involution model introduced in \cite{FNNS15:DATE} by
adding limited \emph{non-determinism} $\boldsymbol{\eta}=[-\eta^-,\eta^+]$, for
some fixed $\eta^-, \eta^+ \geq0$, on top of the (deterministic) involution
delay function $\delta(T)$.  We prove that this can be done without sacrificing
faithfulness: both the original SPF impossibility result and, in particular, a
novel SPF possibility hold for this generalized model. We need to stress,
however, that adding non-determinism is merely a convenient way of securing
maximum generality of our results: no practically observable bounded jitter
phenomenon, neither bounded random noise, from white to slowly varying flicker
noise \cite{CR16:arxiv}, nor even \emph{adversarially} chosen transition time
variations can invalidate the faithfulness of the resulting \emph{\etamodel}.
Deterministic effects, like slightly different thresholds due to process
variations, are of course also covered.

Note carefully that, albeit the non-determinism ($\eta^+$ and $\eta^-$) must be
restricted to ensure faithfulness, the mere fact that we can afford some
non-determinism here \emph{at all} is very surprising: comparing the faithful
original involution model and the non-faithful DDM model reveals that they
primarily differ in handling fast glitch trains, where small delay shifts have
large effects. We thus conjectured originally that adding even small
non-determinism would break the border between both models, which we now know is
not the case.

Our generalization also results in an improved \emph{principal}\footnote{We
  stress that we do not aim at resolving the non-determinism of the \etamodel\
  to build an accurate simulator in this paper, but rather at providing a model
  that makes this possible.} modeling accuracy of the \etamodel: thanks to the
additional freedom for choosing transition times provided by
$\boldsymbol{\eta}$, it is obviously easier to match the real behavior of a
circuit with some feasible behavior of the circuit in the model.  We provide
some simulation results (in a similar setting as used in \cite{NFNS15:GLSVLSI}),
which demonstrate that it is indeed possible to match the behavior of a real
inverter chain with the \etamodel\ if the variations of operating conditions
resp.\ process variations are small. Whereas this does not hold for larger
variations, we observed that excessive deviations occur for relatively large
values of $T$ only, which are essentially irrelevant for faithfulness. We are of
course aware that more validation experiments, with more complex circuits, will
be needed to actually claim good accuracy of the \etamodel, nevertheless, our
preliminary results are encouraging.

%\medskip

Regarding applicability, we consider the \etamodel\ interesting for primarily
two reasons: First, it facilitates accurate modeling and analysis of circuits
under (restricted) noise, varying operating conditions and parameter variations.
Second, to the best of our knowledge, it is the first model that appears to be a
suitable basis for the sound formal verification of a circuit, which aims at
proving that the circuit meets its specification in \emph{every} feasible
trace. We thus believe that our \etamodel\ might eventually turn out to be an
interesting ingredient for a novel verification tool.

%\medskip
%\smallskip

\noindent
{\bf Paper organization:}
In Section~\ref{sec:model}, we provide some indispensable basics of standard
involution channels taken from \cite{FNNS15:DATE}. Section~\ref{sec:choice}
defines our \etamodel, Section~\ref{sec:possibility} provides the proofs for
faithfulness.  Our simulation results are presented in
Section~\ref{sec:simulations}, and some conclusions and directions of our
current/future work are appended in Section~\ref{sec:conclusions}.

\section{The Involution Model Without Choice}\label{sec:model}

Before we can present the generalized \etamodel\ with non-deterministic delay
variations, we recall the basics from the circuit model introduced
in~\cite{FNNS15:DATE}.

\smallskip

\bfno{Signals.}  A {\em falling transition\/} at time~$t$ is the pair $(t,0)$, a
{\em rising transition\/} at time~$t$ is the pair $(t,1)$.  A {\em signal\/} is
a list of alternating \ifthenelse{\boolean{SHORTversion}}{transitions where only
  finitely many alternating transitions may occur in a finite time interval.}{
  transitions such that
\begin{enumerate}[S1)]
\item the initial transition is at time~$-\infty$; all other transitions are at times~$t\geq0$,
\item the sequence of transition times is strictly increasing,
\item if there are infinitely many transitions in the list, then the set of
transition times is unbounded.
\end{enumerate}

To every signal~$s$ (uniquely) corresponds a function $\IR\to\{0,1\}$, its \emph{signal trace}, whose value at time~$t$ is that of the most recent transition.
}

\smallskip

\bfno{Circuits.}  Circuits are obtained by interconnecting the external
interface, i.e., a set of input and output ports, and a set of combinational
gates via channels.  The valid connections are constrained by demanding that
gates and channels must alternate on every path in the circuit and that any gate
input and output port is attached to only one channel output.  Formally we
describe a circuit by a directed graph with potentially multiple edges between
nodes.  Its nodes are in/out ports and gates, and edges are channels.  A channel
has a channel function, which maps input signals to output signals, whereas a
gate is characterized by a (zero-time) Boolean function and an initial Boolean
value that defines its output until time~$0$.  Channels connecting input and
output ports are assumed to have zero delay, in order to facilitate the
composition of circuits.

\smallskip

\bfno{Executions.}  An execution of circuit~$C$ is an assignment of signals to
the vertices and edges of $C$ that respects channel functions, Boolean gate
functions, and initial values of gates.  Signals on input ports are
unrestricted.  For an edge $c$ representing a channel with channel function
$f_c$ from vertex $v$ in $C$, we require that the signal $s_c$ assigned to $c$
fulfills $s_c = f_c(s_v)$.

\smallskip

\bfno{Involution Channels.}  An involution channel propagates each transition at
time~$t$ of the input signal to a transition at the output happening after some
input-to-output delay $\delta(T)$, which depends on the previous-output-to-input
delay $T$ (cf.\ \figref{fig:shc}).

An involution channel function is characterized by
two strictly increasing concave delay
functions
$\delta_\uparrow:(-\delta^\downarrow_\infty,\infty)\to(-\infty,\delta^\uparrow_\infty)$
and
$\delta_\downarrow:(-\delta^\uparrow_\infty,\infty)\to(-\infty,\delta^\downarrow_\infty)$
such that both 
$\delta^\uparrow_\infty=\lim_{T\to\infty}\delta_\uparrow(T)$ 
and
$\delta^\downarrow_\infty=\lim_{T\to\infty}\delta_\downarrow(T)$ 
are finite and
\begin{equation}\label{eq:involution}
-\delta_\uparrow\big( -\delta_\downarrow(T) \big) = T
\text{ and }
-\delta_\downarrow\big( -\delta_\uparrow(T) \big) = T
\end{equation}
for all~$T$.  All such functions are necessarily continuous.  For simplicity, we
will also assume them to be differentiable; $\delta$ being concave thus implies
that its derivative $\delta'$ is monotonically decreasing.  In this paper, we
assume all involution channels to be {\em strictly causal}, i.e.,
$\delta_\uparrow(0)>0$ and $\delta_\downarrow(0)>0$.

A particular and important special case are the so-called {\em exp-channels}:
They occur when gates drive RC-loads and generate digital transitions when reaching a 
certain threshold voltage $V_{th}$ (typically $V_{th} = 1/2$ of the maximum
voltage $V_{DD}$).
We obtain
\begin{eqnarray*}
\delta_\uparrow(T)=\tau\ln(1-e^{-(T+T_p-\tau\ln(\overline{V_{th}}))/\tau})+T_p-\tau\ln(1-\overline{V_{th}})\nonumber\\
\delta_\downarrow(T)=\tau\ln(1-e^{-(T+T_p-\tau\ln(1-\overline{V_{th}}))/\tau})+T_p-\tau\ln(\overline{V_{th}})\,,\,\label{eq:exp}
\end{eqnarray*}
where~$\tau$ is the RC constant,~$T_p$
  the pure delay component and~$\overline{V_{th}} = V_{th}/V_{DD}$.

\ifthenelse{\boolean{SHORTversion}}{}{
For ease of reference, we restate the following technical lemma
from \cite{FNNS14:arxiv,FNNS15:DATE}:

\begin{lemma}
\label{lem:delta:min}
A strictly causal involution channel has a unique~$\deltamin$ defined
by $\delta_\uparrow(-\deltamin) = \deltamin = 
\delta_\downarrow(-\deltamin)$, which is positive.
For exp-channels, $\deltamin=T_p$.

For the derivative, we have $\delta_\uparrow'(-\delta_\downarrow(T)) = 1/\delta_\downarrow'(T)$
and hence $\delta_\uparrow'(-\deltamin)  =
1/\delta_\downarrow'(-\deltamin)$.
\end{lemma}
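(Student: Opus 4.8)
The plan is to treat the three assertions separately, with the fixed-point claim being the heart of the matter. For existence and uniqueness of $\deltamin$, I would first observe that it suffices to find a single point $T_0$ with $\delta_\uparrow(T_0) = -T_0$: the involution property then forces $\delta_\downarrow$ through the same point. Indeed, substituting $\delta_\uparrow(T_0) = -T_0$ into the second identity of~\eqref{eq:involution} gives $-\delta_\downarrow(-\delta_\uparrow(T_0)) = -\delta_\downarrow(T_0) = T_0$, i.e.\ $\delta_\downarrow(T_0) = -T_0$ as well. Setting $\deltamin := -T_0$ then yields $\delta_\uparrow(-\deltamin) = \deltamin = \delta_\downarrow(-\deltamin)$, exactly as required, so the whole question reduces to the single equation $\delta_\uparrow(T) = -T$.

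To solve it, I would study $F(T) := \delta_\uparrow(T) + T$. Since $\delta_\uparrow$ is strictly increasing, $F$ is strictly increasing, which gives uniqueness of a zero outright. For existence I would invoke the intermediate value theorem together with the boundary behaviour dictated by the prescribed domain and codomain $(-\delta^\downarrow_\infty,\infty)\to(-\infty,\delta^\uparrow_\infty)$: as $T\to\infty$ the value $\delta_\uparrow(T)$ stays below the finite limit $\delta^\uparrow_\infty$ while $T\to\infty$, so $F(T)\to\infty$; as $T$ approaches the finite left endpoint $-\delta^\downarrow_\infty$ the value $\delta_\uparrow(T)\to-\infty$, so $F(T)\to-\infty$. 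Hence $F$ has exactly one zero $T_0$. Strict causality then pins down the sign: $F(0)=\delta_\uparrow(0)>0$, so by monotonicity $T_0<0$, i.e.\ $\deltamin=-T_0>0$.

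The exp-channel identity $\deltamin=T_p$ I would establish simply by verifying that $T=-T_p$ solves $\delta_\uparrow(T)=-T$; uniqueness from the previous step then closes the argument. Substituting $T=-T_p$ collapses the exponent $-(T+T_p-\tau\ln\overline{V_{th}})/\tau$ to $\ln\overline{V_{th}}$, so the exponential equals $\overline{V_{th}}$ and the leading logarithm becomes $\tau\ln(1-\overline{V_{th}})$, which cancels the trailing $-\tau\ln(1-\overline{V_{th}})$ and leaves precisely $T_p$. This is a routine cancellation rather than a genuine obstacle.

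For the derivative relation I would differentiate the involution identity $-\delta_\uparrow(-\delta_\downarrow(T))=T$ by the chain rule, obtaining $\delta_\uparrow'(-\delta_\downarrow(T))\,\delta_\downarrow'(T)=1$, which rearranges to the claimed $\delta_\uparrow'(-\delta_\downarrow(T))=1/\delta_\downarrow'(T)$; here the differentiability assumed in the excerpt legitimizes the chain rule, and the product identity itself certifies $\delta_\downarrow'(T)\neq0$, so the division is valid. Evaluating at $T=-\deltamin$ and using $\delta_\downarrow(-\deltamin)=\deltamin$ from the first part turns the left argument $-\delta_\downarrow(-\deltamin)$ into $-\deltamin$, giving $\delta_\uparrow'(-\deltamin)=1/\delta_\downarrow'(-\deltamin)$. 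The only delicate point in the whole lemma is the boundary analysis feeding the intermediate value theorem, namely making rigorous that $\delta_\uparrow\to-\infty$ at the left endpoint of its domain, which I would justify from the stated bijective domain/codomain structure of the involution functions; everything else is a one-line consequence or a direct computation.
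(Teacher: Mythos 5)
Your proposal is correct: reducing both defining equations to the single fixed-point equation $\delta_\uparrow(T)=-T$ via \eqref{eq:involution}, applying the intermediate value theorem to the strictly increasing $F(T)=\delta_\uparrow(T)+T$ — with the delicate limit $\delta_\uparrow(T)\to-\infty$ at the left endpoint of the domain properly justified by observing that \eqref{eq:involution} makes $-\delta_\uparrow$ and $-\delta_\downarrow$ mutually inverse (hence surjective) strictly increasing bijections — using strict causality $\delta_\uparrow(0)>0$ only to fix the sign, verifying $T=-T_p$ directly for exp-channels, and obtaining the derivative identity by differentiating $-\delta_\uparrow(-\delta_\downarrow(T))=T$ with the chain rule (which also certifies $\delta_\downarrow'(T)\neq0$). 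Note that this paper merely restates Lemma~\ref{lem:delta:min} from \cite{FNNS14:arxiv,FNNS15:DATE} without reproducing a proof, so there is no in-paper argument to compare against; your argument is the standard one from that prior work, and it is complete.
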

}

The channel function $f_c$ mapping input signal $s$ to output signal $f_c(s)$
(cp.\ \figref{fig:pulsetrain_symb}) is defined via the following algorithm. It
can easily be implemented in e.g.\ VHDL to be used by existing simulators like
ModelSim, as these simulators automatically drop transitions on signals
violating FIFO order.

%\noindent
{\em Output transition generation algorithm:\/}
Let~$t_1,t_2,\dots$ be the transitions times of~$s$,
set $t_0 = -\infty$ and $\delta_0=0$.
\begin{itemize}
\item {\em Initialization:\/}
Copy the initial transition at time $-\infty$ from the input signal
  to the output signal.
\item {\em Iteration:\/} Iteratively determine the tentative list of pending output
  transitions:
Determine the input-to-output delay $\delta_n$ for the input transition at time~$t_n$ by
setting
$\delta_n = \delta_\uparrow(t_n - t_{n-1} - \delta_{n-1})$ if $t_n$ is a rising
transition and
$\delta_n = \delta_\downarrow(t_n-t_{n-1}-\delta_{n-1})$
if it is falling.
The $n$\textsuperscript{th} and $m$\textsuperscript{th} pending output transitions {\em cancel\/} if $n < m$ but
$t_n+\delta_n \geq t_m+\delta_m$. In this case, we mark both as canceled.
\item {\em Return:\/}
The channel output signal~$f_c(s)$ has the same initial value as the input signal, and contains every pending 
transition at time $t_n + \delta_n$ that has not been marked as canceled.
\end{itemize}

\section{Introducing Adversarial Choice}\label{sec:choice}

We now generalize the circuit model from the previous section to allow a
non-deterministic perturbation of the output transition times after the
application of the delay functions~$\dup$ and~$\ddo$.  Note that the resulting
output shifts need \emph{not} be the same for all applications of the delay
functions; they can vary arbitrarily from one transition to the next.  However,
each perturbation needs to be within some pre-determined interval
${\boldsymbol{\eta}} = [-\eta^-,\eta^+]$.  These non-deterministic choices can
be used to model various effects in digital circuits that cannot be captured by
single-history delay functions, ranging from arbitrary types of noise
\cite{CR16:arxiv} to unknown variations of process parameters and operating
conditions.
\begin{figure}
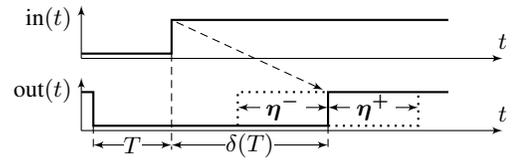

\center
\tikzfigureintroeta2
\caption{\small\em The \etachannel: Non-deterministic choice of the tentative output transition after applying $\delta(T)$.} 
\label{fig:delta:eta}
\vspace{-0.5cm}
\end{figure}
Fig.~\ref{fig:delta:eta} shows the possible variation of the output transition time
caused by the non-deterministic choice.

Formally, we change the notion of the \emph{channel function} to accept an
additional parameter: A channel has a channel function, which maps each pair
$(s,H)$ to an output signal, where~$s$ is the channel's input signal and~$H$ is
a parameter taken from some suitable set of admissible parameters (see below).
We also adapt the definition of an \emph{execution} to allow an adversarial
choice of~$H$: For an edge $c$ from $v$ in $C$, we require that there exists
some admissible parameter $H$ such that the signal $s_c$ fulfills
$s_c = f_c(s_v,H)$.

For \etachannel s, we let the admissible parameters~$H$ be any sequence of
choices~$\eta_n \in {\boldsymbol{\eta}}$. The output transition generation
algorithm's \emph{Iteration} step for the $n$\textsuperscript{th} transition of
the input signal is adapted as follows:
$\delta_n = \delta_\uparrow(\max\{t_n - t_{n-1} -
\delta_{n-1},-\delta_\infty^\downarrow\})+\eta_n$ if $t_n$ is a rising transition
and
$\delta_n =
\delta_\downarrow(\max\{t_n-t_{n-1}-\delta_{n-1},-\delta_\infty^\uparrow\})+\eta_n$
if it is falling. Note that the $\max$-terms guard agains adversarial choices
that would exceed the domain of $\dup(.)$ and $\ddo(.)$. This could occur only
in the extreme situation of a short glitch after a long stable input, which must
be canceled anyway. So enforcing
$\delta_n = \delta_\uparrow(-\delta_\infty^\downarrow)+\eta_n=-\infty$ resp.\
$\delta_n = \delta_\downarrow(-\delta_\infty^\uparrow)+\eta_n = -\infty$ in
this case is safe. As this cannot occur in the cases analyzed in this paper, we
will subsequently omit the $\max$-terms in the definition of $\delta_n$ for
simplicity.

\figref{fig:pulsetrain_eta_symb} depicts two example signal traces,
$\text{out}_1$ and $\text{out}_2$, obtained by an \etachannel\ with the same
underlying $\delta$ as the one in \figref{fig:pulsetrain_symb}.  Observe that
the adversary has the freedom to ``de-cancel'' pulses that would have canceled
according to the delay function (second pulse in $\text{out}_2$), extend pulses
(first pulse in $\text{out}_1$), and shift pulses (first pulse in
$\text{out}_2$).

\begin{figure}
\centering
\begin{tikzpicture}[>=latex',yscale=0.8]
% coordinates
\draw[->] (0,0) -- ++(6,0) node[above] {\small $t$};
\draw[->] (0,0) -- ++(0,0.6) node[left] {\small $\text{in}(t)$};

% in
% : original signal
\draw[thick] (0,0.05) -- ++(0.5,0) -- ++(0,0.5)
                      -- ++(1.1,0) -- ++(0,-0.5)
                      -- ++(2,0) -- ++(0,0.5)
                      -- ++(0.4,0) -- ++(0,-0.5)
                      -- ++(1.0,0) -- ++(0,0.5) -- ++(0.9,0);
                      
% out
\draw[->] (0,-1) -- ++(6,0) node[above] {\small $t$};
\draw[->] (0,-1) -- ++(0,0.6) node[left] {\small $\text{out}_1(t)$};

% : original signal
\draw[dotted,thick] (0,-0.95) -- ++(1.1,0) -- ++(0,0.5) % delay 0.6
                      -- ++(0.7,0) -- ++(0,-0.5)    % delay 0.2
                      -- ++(3.8,0) -- ++(0,0.5) -- ++(0.3,0); % delay 0.3

% : new signal
\draw[thick] (0,-0.95) -- ++(0.8,0) -- ++(0,0.5)
                      -- ++(1.5,0) -- ++(0,-0.5)
                      -- ++(2.5,0) -- ++(0,0.5) -- ++(1.1,0);                      

% delays
\draw[densely dashed,->,shorten >=1pt,shorten <=2pt,thick] (0,0.05)++(0.5,0.5) -- (1.1,-0.45);                 
\draw[densely dashed,->,shorten >=1pt,shorten <=2pt,thick] (0,0.05)++(1.6,0.5) -- (1.8,-0.45);    

\draw[densely dashed,->,shorten >=1pt,shorten <=2pt,thick] (0,0.05)++(3.6,0.5) -- (4.1,-0.45);                 
\draw[densely dashed,->,shorten >=1pt,shorten <=2pt,thick] (0,0.05)++(4,0.5) -- (3.8,-0.45);

\draw[densely dashed,->,shorten >=1pt,shorten <=2pt,thick] (0,0.05)++(5,0.5) -- (5.6,-0.45);

% : eta shifts            
\draw[->] (0,-1.1)++(1.1,0) -- ++(-0.3,0) node[midway,below] {\small -$\eta_1$}; % -0.3 shift
\draw[->] (0,-1.1)++(1.8,0) -- ++(0.5,0) node[midway,below] {\small $\eta_2$}; % 0.3 shift
\draw ($ (0,0.05)+(3.6,-1.15) $) node[below] {\small $\eta_3$=$\eta_4$=$0$}; % 0 shift
%\draw ($ (0,0.05)+(4.0,0.56) $) node[above] {\small $\eta_3$}; % 0 shift
\draw[->] (0,-1.1)++(5.6,0) -- ++(-0.8,0) node[midway,below] {\small -$\eta_5$}; % -0.4 shift                    

\end{tikzpicture}\\
\begin{tikzpicture}[>=latex',yscale=0.8]
% coordinates
\draw[->] (0,0) -- ++(6,0) node[above] {\small $t$};
\draw[->] (0,0) -- ++(0,0.6) node[left] {\small $\text{in}(t)$};

% in
% : original signal
\draw[thick] (0,0.05) -- ++(0.5,0) -- ++(0,0.5)
                      -- ++(1.1,0) -- ++(0,-0.5)
                      -- ++(2,0) -- ++(0,0.5)
                      -- ++(0.4,0) -- ++(0,-0.5)
                      -- ++(1.0,0) -- ++(0,0.5) -- ++(0.9,0);

% out
\draw[->] (0,-1) -- ++(6,0) node[above] {\small $t$};
\draw[->] (0,-1) -- ++(0,0.6) node[left] {\small $\text{out}_2(t)$};

% : original signal
\draw[dotted,thick] (0,-0.95) -- ++(1.1,0) -- ++(0,0.5) % delay 0.6
                      -- ++(0.7,0) -- ++(0,-0.5)    % delay 0.2
                      -- ++(3.8,0) -- ++(0,0.5) -- ++(0.3,0); % delay 0.3

% : new signal
\draw[thick] (0,-0.95) -- ++(1.4,0) -- ++(0,0.5)
                       -- ++(0.8,0) -- ++(0,-0.5)
                       -- ++(1.3,0) -- ++(0,0.5)
                       -- ++(0.9,0) -- ++(0,-0.5)
                       -- ++(0.8,0) -- ++(0,0.5) -- ++(0.7,0);

% delays
\draw[densely dashed,->,shorten >=1pt,shorten <=2pt,thick] (0,0.05)++(0.5,0.5) -- (1.1,-0.45);                 
\draw[densely dashed,->,shorten >=1pt,shorten <=2pt,thick] (0,0.05)++(1.6,0.5) -- (1.8,-0.45);    

\draw[densely dashed,->,shorten >=1pt,shorten <=2pt,thick] (0,0.05)++(3.6,0.5) -- (4.1,-0.45);                 
\draw[densely dashed,->,shorten >=1pt,shorten <=2pt,thick] (0,0.05)++(4,0.5) -- (3.8,-0.45);

\draw[densely dashed,->,shorten >=1pt,shorten <=2pt,thick] (0,0.05)++(5,0.5) -- (5.6,-0.45);

% : eta shifts            
\draw[->] (0,-1.1)++(1.1,0) -- ++(0.3,0) node[midway,below] {\small $\eta_1$}; % 0.3 shift
\draw[->] (0,-1.1)++(1.8,0) -- ++(0.4,0) node[midway,below] {\small $\eta_2$}; % 0.4 shift
\draw[->] (0,-1.1)++(4.1,0) -- ++(-0.6,0) node[below] {\small -$\eta_3$}; % 0.5 shift
\draw[->] (0,-1.5)++(3.8,0) -- ++(0.6,0) node[above] {\small $\eta_4$}; % 0.5 shift
%\draw ($ (0,0.05)+(4.0,0.56) $) node[above] {\small $\eta_3$}; % 0 shift
\draw[->] (0,-1.1)++(5.6,0) -- ++(-0.4,0) node[midway,below] {\small -$\eta_5$};
% -0.4 shift

\draw[dashed] (3.8,-0.45) -- (3.8,-1.5);
\draw[dashed] (4.1,-0.45) -- (4.1,-1.1);

\end{tikzpicture}
\ifthenelse{\boolean{SHORTversion}}{\caption{\small\em The \etachannel\ covers pulse attenuation under (bounded) adversarial noise,
  varying operating conditions, parameter variations and other modeling inaccuracies.
}\label{fig:pulsetrain_eta_symb}}
{\caption{\small\em The \etachannel\ covers pulse attenuation under (bounded) adversarial noise,
  varying operating conditions, parameter variations and other modeling inaccuracies.
Observe the different output behaviors $\text{out}_1$ and $\text{out}_2$ for the same input trace, caused by different adversarial choices ($\eta_1, \eta_2, \dots$). 
The output transitions that would have been caused just by $\delta(T)$, without $\boeta$-shifts, 
are dotted. Note that different adversarial choices usually change the history and, hence, $T$ 
and thus $\delta(T)$.
}\label{fig:pulsetrain_eta_symb}}
\end{figure}

\section{Faithfulness of Involution Channels with Adversarial Choice}\label{sec:possibility}

In this section, we will prove that \etachannel s are faithful with respect to
{\em Short-Pulse Filtration (SPF)\/}.

A {\em pulse\/} of length~$\Delta$ at time~$T$ has initial value~$0$, one rising
transition at time~$T$, and one falling transition at time $T+\Delta$.
\ifthenelse{\boolean{SHORTversion}}{}{A signal {\em contains a pulse\/} of
  length~$\Delta$ at time~$T$ if it contains a rising transition at time~$T$, a
  falling transition at time $T+\Delta$ and no transition in between.}

\begin{definition}[Short-Pulse Filtration]\label{def:spf}
A circuit with a single input and a single output port solves Short-Pulse
Filtration (SPF), if it fulfills the
following conditions for all admissible channel function parameters~$H$:
\begin{enumerate}[F1)]
\item The circuit has exactly one input and one output port. {\em
        (Well-formedness)}
\item A zero input signal produces a zero output
	signal. {\em (No generation)}
\item There exists an input pulse such that
	the output signal is not the zero signal. {\em (Nontriviality)}
\item There exists an~$\varepsilon>0$ such that for every input pulse the output
	signal never contains a pulse of length less than~$\varepsilon$. {\em
	(No short pulses)}
\end{enumerate}
\end{definition}
Note that we allow the SPF circuit to behave arbitrarily if the input
signal is not a (single) pulse.

To show faithfulness of the \etamodel, we start with the trivial direction: we
prove that no circuit with \etachannel s can solve the bounded-time variant of
SPF (where the output must stabilize to constant 0 or 1 within bounded
time). Note that this matches the well-known impossibility \cite{Mar77} of
building such a circuit in reality.  Indeed, the result immediately follows from
the fact that the adversary is free to always choose $\eta_n = 0$, i.e., make
the \etachannel s behave like involution channels.  In
\cite{FNNS15:DATE,FNNS14:arxiv}, it has been shown that no circuit with
involution channels can solve bounded-time SPF, which completes the proof.

What hence remains to be shown is the existence of a circuit that solves SPF
(with unbounded stabilization time) with \etachannel s.  We can prove that the
circuit shown in \figref{fig:circuit}, which consists of a fed back OR-gate
forming the storage loop and a subsequent buffer with a suitably chosen (high)
threshold voltage (modeled as an exp-channel), does the job.  As a consequence,
a circuit model based on \etachannel s enjoys the same faithfulness as the
involution channels of \cite{FNNS15:DATE}, even though its set of allowed
behaviors is considerably larger.

%\vspace{-0.4cm}
\begin{figure}
\centering
\begin{tikzpicture}[>=latex',circuit logic US, scale=1.0]

\draw (0,0) node[or gate,small circuit symbols] (nor) {\,\,\,OR};

\draw (nor.output) -- ++(0.1,0)
  -- ++(0,-0.7)
  -- ++(-1.1,0) node[fill=white,right,draw=black,rectangle,rounded corners,minimum width=10mm,minimum height=4.5mm,xshift=-0mm]
  (chan) {$c$}
  -- ++(-0.2,0)
  |- (nor.input 2);

\draw[<-] (nor.input 1)
        -- ++(left:0.8) node[left] (i) {$i$};

\draw[->] (nor.output) -- ++(0.1,0) node[circle,inner sep=0.7pt,fill=black,draw] (huhu) {}
        -- ++(right:0.2) node[fill=white,right,draw=black,rectangle,rounded corners,minimum width=10mm,minimum height=4.5mm,xshift=-0mm] (ht) {HT}
        -- ++(right:1.4) node[buffer gate,small circuit symbols,fill=white] {}
	-- ++(right:0.9) node[right] {$o$}
	;

%\draw[-] (huhu) -- ++(0,-0.5) -- ++(-3,0) -- ++(0,0.3) -- ++(0.3,0);
\end{tikzpicture}
%\vspace{-0.2cm}
\caption{\small\em A circuit solving unbounded SPF, consisting of an OR-gate, with initial value $0$, fed back by
channel~$c$, and a high-threshold buffer HT.}
\label{fig:circuit}
%\vspace{-0.3cm}
\vspace{-0.5cm}
\end{figure}
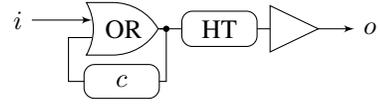

Informally, we consider a pulse of length~$\Delta_0$ at time~$0$ at the input
and reason about the behavior of the feed-back loop, i.e., the output of the OR
gate.  There are 3 cases: If $\Delta_0$ is small, then the pulse is filtered by
the channel in the feed-back loop.  If it is big, the pulse is captured by the
storage loop, leading to a stable output~1.  For a certain range of $\Delta_0$,
the storage loop may be oscillating, possibly forever. In any case, however, it
turns out that a properly chosen exp-channel can translate this behavior to a
legitimate SPF output.

\ifthenelse{\boolean{SHORTversion}}{}
{
\begin{lemma}
\label{lem:big:pulse}
If the input pulse's length $\Delta_0$ satisfies
$\Delta_0\geq\delta_\infty^\uparrow + \eta^+$, then the output of the OR in
\figref{fig:circuit} has a unique rising transition at time~0, and no falling
transition.
\end{lemma}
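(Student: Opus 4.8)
The plan is to exhibit the essentially unique execution of the feedback loop and verify its self-consistency, rather than reasoning about all executions at once. Write $g$ for the signal on the OR output and $b=f_c(g,H)$ for the fed-back signal, so that the zero-time OR gate forces $g$ to be the pointwise disjunction $i\vee b$ of the input pulse~$i$ and the feedback~$b$. I will propose the candidate $g$ that equals $0$ before time~$0$ and $1$ from time~$0$ on (a single rising transition at time~$0$ and no falling transition), and then check that passing this $g$ through channel~$c$ and OR-ing with~$i$ reproduces~$g$. The well-definedness of the loop rests on the positive minimum delay $\deltamin$ of the involution channel (Lemma~\ref{lem:delta:min}), which, for the bounded $\boeta$-shift, keeps every surviving feedback edge strictly later than its cause, so that $g$ on any bounded interval is determined by its strictly earlier values.

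First I would pin down the rising transition. Just before time~$0$ both OR inputs are~$0$, so $g=0$; at time~$0$ the pulse makes $i=1$, hence $g$ has a rising transition at time~$0$ independently of~$b$. As this is $g$'s first transition after the initial one at~$-\infty$, the output-generation algorithm uses $t_0=-\infty$ and $\delta_0=0$, so the previous-output-to-input offset is $T=+\infty$ and the induced feedback edge is rising with delay $\delta_1=\delta_\uparrow(+\infty)+\eta_1=\delta_\infty^\uparrow+\eta_1$. Thus $b$ has its first (rising) transition at time $\delta_\infty^\uparrow+\eta_1$, and $\eta_1\le\eta^+$ together with the hypothesis $\Delta_0\ge\delta_\infty^\uparrow+\eta^+$ gives $\delta_\infty^\uparrow+\eta_1\le\Delta_0$: the feedback has gone high no later than the instant the pulse ends.

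It then remains to exclude a falling transition of~$g$. The OR output can fall only when both inputs are simultaneously~$0$. On $[0,\Delta_0)$ we have $i=1$, so $g=1$ there; at time~$\Delta_0$ the input falls, but $b=1$ already holds at~$\Delta_0$ (since $\delta_\infty^\uparrow+\eta_1\le\Delta_0$), so $i\vee b=1$ and $g$ does not fall. For $t>\Delta_0$ the input stays~$0$, but because $g$ has had no transition other than the rising one at time~$0$, channel~$c$ sees a single rising input edge and emits a single rising output edge; hence $b$ remains~$1$ and $g=0\vee1=1$ throughout. This closes the loop and confirms the candidate: $g$ has exactly one transition after~$-\infty$, the rising one at time~$0$, and no falling transition.

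The hard part will be handling the circularity of the feedback, i.e.\ justifying that the candidate is the actual and unique execution rather than merely a consistent guess. The clean route is a forward induction on time that exploits the positive minimum delay $\deltamin$ of Lemma~\ref{lem:delta:min}: surviving (non-canceled) output edges of channel~$c$ lag their causing input edge by a positive amount even after the $\boeta$-shift, so the values of~$g$ on $[0,s)$ determine~$b$, and hence~$g$, on a strictly longer interval; this simultaneously forbids spurious early feedback edges and pins the execution uniquely. The one delicate quantitative input is that the first feedback edge must land in the window $(0,\Delta_0]$: its right endpoint is exactly the hypothesis $\Delta_0\ge\delta_\infty^\uparrow+\eta^+$, while its left endpoint $\delta_\infty^\uparrow+\eta_1>0$ is where the smallness restriction on~$\eta^-$ enters.
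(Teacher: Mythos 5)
Your proof is correct and takes essentially the same route as the paper's: the single quantitative step in both is that the first feedback edge reaches the channel output by time $\delta_\infty^\uparrow + \eta_1 \leq \delta_\infty^\uparrow + \eta^+ \leq \Delta_0$, so the storage loop locks before the input pulse ends and the OR output sticks to~$1$. The paper's proof is just this three-sentence observation; your explicit candidate-execution check, the induction on time justifying well-definedness of the feedback loop, and the side remark on positivity of the first feedback edge (which implicitly leans on a bound on $\eta^-$ the lemma itself does not state) are extra rigor the paper leaves implicit.
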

\begin{proof}
  Clearly, the output of the OR, hence the \etachannel's input, will have a
  rising transition at time 0.  The corresponding rising transition occurs at
  the channel output at the latest at
  $\eta^+ + \delta_\infty^\uparrow \leq \Delta_0$.  This guarantees the storage
  loop to lock, causing the output of the OR output to stick to~1.
\end{proof}

\begin{lemma}
\label{lem:small:pulse}
If the input pulse's length $\Delta_0$ satisfies
$\Delta_0 \leq \delta_\infty^\uparrow - \deltamin - \eta^+ - \eta^-$, then the OR output 
in \figref{fig:circuit} contains only the input pulse.
\end{lemma}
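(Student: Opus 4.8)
The plan is to trace the feed-back loop causally and show that the single pulse injected into channel~$c$ is annihilated by non-FIFO cancellation, so that the feed-back input of the OR never rises and the OR output is exactly the input pulse. First I would record the transitions that the OR output~$v$ must carry \emph{provided} the feed-back stays~$0$: a rising transition at $t_1=0$ (forced by $i$) and, as long as the channel output is still~$0$ at time~$\Delta_0$, a falling transition at $t_2=\Delta_0$. Running these through the \etachannel\ iteration with $t_0=-\infty$, $\delta_0=0$ yields $\delta_1=\delta_\infty^\uparrow+\eta_1$ and $\delta_2=\delta_\downarrow(\Delta_0-\delta_\infty^\uparrow-\eta_1)+\eta_2$, so the tentative rising and falling output transitions are scheduled at $R=\delta_\infty^\uparrow+\eta_1$ and $F=\Delta_0+\delta_\downarrow(x)+\eta_2$, where I abbreviate $x:=\Delta_0-\delta_\infty^\uparrow-\eta_1$.

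Next I would close the causal loop, which is the step that needs care. Since $\eta_1\geq-\eta^-$ and $\deltamin>0$, the hypothesis gives $R\geq\delta_\infty^\uparrow-\eta^->\Delta_0$, so the channel's only candidate rising edge lands \emph{strictly after} $\Delta_0$; hence the feed-back is genuinely~$0$ on $[0,\Delta_0]$, and $v$ really is a clean pulse of length $\Delta_0$. This breaks the apparent circularity (one is not entitled to compute $\delta_2$ from a clean pulse until the feed-back has been shown quiet up to $\Delta_0$), and it is exactly what the $\deltamin+\eta^+>0$ slack in the bound buys. It then remains to show that the two pending output transitions cancel, i.e.\ $R\geq F$; substituting $R=\Delta_0-x$ turns this into $g(x)\geq\eta_2$, where $g(x):=-x-\delta_\downarrow(x)$.

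The core estimate is then a one-variable monotonicity argument. From $\eta_1\geq-\eta^-$ together with $\Delta_0\leq\delta_\infty^\uparrow-\deltamin-\eta^+-\eta^-$ I obtain $x\leq-\deltamin-\eta^+$. By Lemma~\ref{lem:delta:min}, $\delta_\downarrow(-\deltamin)=\deltamin$, so $g(-\deltamin)=0$; and since $\delta_\downarrow$ is strictly increasing, $g'(x)=-1-\delta_\downarrow'(x)<-1$, i.e.\ $g$ is strictly decreasing with slope steeper than $-1$. Integrating $g'$ from $-\deltamin-\eta^+$ to $-\deltamin$ gives $g(-\deltamin-\eta^+)\geq g(-\deltamin)+\eta^+=\eta^+$, and monotonicity extends this to $g(x)\geq\eta^+\geq\eta_2$ for every $x\leq-\deltamin-\eta^+$. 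Hence $R\geq F$, both pending transitions are marked canceled, the feed-back stays~$0$ forever, and after $\Delta_0$ both OR inputs are~$0$, so $v$ contains only the input pulse.

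I expect the main obstacle to be the causal well-definedness of the loop rather than the inequality itself: one must justify, purely from the scheduling times, that the feed-back is quiet on the whole interval $[0,\Delta_0]$ before computing the falling edge, and check that the bound indeed gives the strict inequality $R>\Delta_0$ so that no simultaneous-transition ambiguity arises at $\Delta_0$. The algebra reducing cancellation to $g(x)\geq\eta_2$ and the monotone bound via Lemma~\ref{lem:delta:min} are then routine, and the argument holds uniformly over all admissible adversarial choices $\eta_1,\eta_2\in\boldsymbol{\eta}$ because only the worst cases $\eta_1\geq-\eta^-$ and $\eta_2\leq\eta^+$ are used.
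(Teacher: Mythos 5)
Your proof is correct and follows essentially the same route as the paper's: compute the adversarially shifted rising and falling output transitions of the feed-back channel, reduce cancellation to the scheduling inequality, bound the previous-output-to-input offset (your $x$, the paper's $T$) by $-\deltamin-\eta^+$, and conclude via monotonicity of $\ddo$ together with Lemma~\ref{lem:delta:min}. Two minor remarks: your explicit verification that the feed-back stays $0$ on $[0,\Delta_0]$ (so the OR output really is a clean pulse) is a causality detail the paper leaves implicit, and your integration of $g'$ is more than needed, since monotonicity of $\ddo$ alone already yields $g(-\deltamin-\eta^+) = \deltamin+\eta^+-\ddo(-\deltamin-\eta^+) \geq \eta^+$.
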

\begin{proof}
  The input signal contains only two transitions: one at time $t_1=0$ and one at
  time $t_2=\Delta_0$. The earliest time when the output transition
  corresponding to the rising input transition can occur is
  $t_1' = \delta_\infty^\uparrow - \eta^-$.  For the falling input transition,
  we thus get $T=\Delta_0-\delta_\infty^\uparrow + \eta^-$, and observe that the
  corresponding falling output transition cannot occur later than
  $t_2'=\Delta_0+\eta^+ + \ddo(T)$. The two output transitions cancel iff
  $t_2' \leq t_1'$, which is equivalent to
  $X=\Delta_0+\eta^+ + \ddo(T)-\delta_\infty^\uparrow + \eta^- \leq
  0$. Replacing $\Delta_0$ with the upper bound from the lemma reveals
  $T \leq -\deltamin-\eta^+$ and
  $X \leq -\deltamin + \ddo(-\deltamin-\eta^+) \leq -\deltamin +
  \ddo(-\deltamin) = 0$ by monotonicity of $\ddo$ and Lemma~\ref{lem:delta:min},
  which concludes the proof.
\end{proof}

For an input pulse length that satisfies
$\delta_\infty^\uparrow-\deltamin- \eta^+ - \eta^- < \Delta_0<
\delta_\infty^\uparrow+ \eta^+$, the OR output signal may contain a series of
pulses of lengths $\Delta_0,\Delta_1,\Delta_2, \dots$. In sharp contrast to
standard involution channels \cite{FNNS15:DATE}, it is \emph{not} the case that
there is a unique value~$\Delta_0=\tilde{\Delta}_0$ that leads to an infinite
series of (identical) pulses $\Delta_1=\Delta_2=\dots$\ Rather, due to the
adversarial choices, there is a range of values for $\Delta_0$ that may lead to
a whole range of infinite pulse trains, with varying pulse lengths, which are
surprisingly difficult to bound.

An informal, high-level explanation of the approach that was eventually found to
be successful is the following: we identified a self-repeating infinite
``worst-case pulse train'', which ensures that any adversarial choice that
deviates from it at some point causes the subsequent pulses to die out, i.e., to
resolve to a stable 1. In more detail, let $\Delta_0$ be such that an infinite
self-repeating pulse train $\Delta=\Delta_1=\Delta_2=\dots$ exists, subject to
the constraint that the adversary deterministically takes all rising transitions
maximally ($\eta^+$) late and all falling transitions maximally ($\eta^-$)
early.  Note that this adversarial choice actually minimizes $\Delta_n$ for any
given $\Delta_{n-1}$. Therefore, given a pulse $\Delta_{n-1}=\Delta$, any other
adversarial choice (as well as any larger $\Delta_{n-1}>\Delta$) leads to a
subsequent pulse with $\Delta_n > \Delta$. As a consequence, $\Delta$ is an
\emph{upper} bound for the length of \emph{every} pulse $\Delta_n$, $n\geq 1$,
occurring in an arbitrary \emph{infinite} pulse train: if some
$\Delta_{n-1}>\Delta$ ever happens, then $\Delta_{n+\ell} > \Delta$ for every
$\ell\geq 0$ as well; in fact, Lemma~\ref{lem:exponential:decay} will reveal
that the pulse train will only be finite in these cases.

Similarly, since the adversarial choice that minimizes the up-time $\Delta_n$
simultaneously maximizes the down-time $\Deltado_n$ of a pulse, we also get a a
lower bound $\Deltado_n \geq P-\Delta$ for all pulses in an arbitrary infinite
pulse train, where $P$ is the period of our infinite self-repeating pulse train.

For these arguments to work, we need to restrict the adversarial choice
for the feed-back channel in \figref{fig:circuit}:
\begin{equation}
\eta^+ + \eta^- < \ddo(-\eta^+)-\deltamin\tag{C}\label{C}
\end{equation}
Formally, we have the following 
Lemma~\ref{lem:mid:pulse}:

\begin{lemma}
\label{lem:mid:pulse}
Consider the circuit in \figref{fig:circuit} subject to constraint \eqref{C}.
Assume that the input pulse length $\Delta_0$ is such that it results in an
infinite pulse train $\Delta_0, \Delta_1, \dots$ occurring at the output of the
OR.  Then, for every $n\geq 1$, the up-time $\Delta_n$ satisfies
$\Delta_n \leq \Delta$, the down-time $\Delta_n'$ (preceding the pulse with
up-time $\Delta_n$) satisfies $\Delta_n' \geq P-\Delta$, and
$P_n=\Delta_n+\Delta_{n+1}' \geq P$. Herein, $\Delta=\ddo(\eta^+-\tau)$ with
$\Delta < \deltamin$ is the up-time of an infinite self-repeating pulse train
with period $P=\tau$ and duty cycle $\gamma=\Delta/P$, with $\tau>0$ denoting
the smallest positive fixed point of the equation
$\ddo(\eta^+ - \tau) + \dup(-\eta^- - \tau) = \tau$, which is guaranteed to
exist and satisfies
$\eta^++\deltamin < \tau < \min(-\eta^- + \delta^\downarrow_\infty, \eta^+ +
\delta^\uparrow_\infty)$.
\end{lemma}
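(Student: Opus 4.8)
The plan is to reduce the feed-back loop to two scalar recursions on the pulse lengths and then run an extremal-trajectory argument backed by monotonicity. I would first note that, once the input pulse has ended, the OR-output coincides with the fed-back channel output, so each output transition is the $\delta$-image (shifted by the adversarial choice $\eta_n\in\boeta$) of the corresponding earlier one. Writing $\Delta_n$ for the $n$-th up-time, $\Delta_n'$ for the down-time preceding it, and reading the previous-output-to-input offsets off the output-generation algorithm (they equal $-\Delta_n$ at the rising edge and $-\Delta_{n+1}'$ at the falling edge), I obtain
\begin{align*}
\Delta_n+\Delta_{n+1}' &= \dup(-\Delta_n)+\eta_n,\\
\Delta_{n+1}'+\Delta_{n+1} &= \ddo(-\Delta_{n+1}')+\eta_n',
\end{align*}
with $\eta_n,\eta_n'\in[-\eta^-,\eta^+]$ the shifts of the rising resp.\ falling transition. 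The entire lemma is then a statement about the orbits of this two-line system.

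Next I would construct the self-repeating train. Substituting $\eta_n=\eta^+$, $\eta_n'=-\eta^-$ and looking for a constant orbit $\Delta_n\equiv\Delta$, $\Delta_n'\equiv\Delta'$ turns the recursions, via the involution identity~\eqref{eq:involution}, into $\Delta=\ddo(\eta^+-\tau)$ and $\Delta'=\dup(-\eta^--\tau)$ with $P=\Delta+\Delta'=\tau$, where $\tau$ solves $\ddo(\eta^+-\tau)+\dup(-\eta^--\tau)=\tau$. To produce $\tau$ I set $h(\tau)=\ddo(\eta^+-\tau)+\dup(-\eta^--\tau)-\tau$, which is continuous and strictly decreasing on $\tau<\min(-\eta^-+\delta^\downarrow_\infty,\eta^++\delta^\uparrow_\infty)$ and tends to $-\infty$ at the right end of this interval (one argument reaches the boundary of the domain). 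The lower bound is the decisive use of~\eqref{C}: evaluating at $\tau=\eta^++\deltamin$ and using $\ddo(-\deltamin)=\deltamin$ (Lemma~\ref{lem:delta:min}) reduces $h(\eta^++\deltamin)>0$ to $\dup(-\eta^+-\eta^--\deltamin)>\eta^+$, which is exactly~\eqref{C} transported through the involution and the monotonicity of $\ddo$. The intermediate value theorem yields $\tau$ in the stated interval, strict monotonicity makes it the unique (hence smallest) root, and $\tau>\eta^++\deltamin$ gives $\Delta=\ddo(\eta^+-\tau)<\ddo(-\deltamin)=\deltamin$.

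The core is the up-time bound. I would check that $\eta_n=\eta^+$ (rise as late as possible) and $\eta_n'=-\eta^-$ (fall as early as possible) pointwise \emph{minimize} the successor $\Delta_{n+1}$: the first recursion shows $\Delta_{n+1}'$ is decreasing in $\eta_n$ and in $\Delta_n$, the second that $\Delta_{n+1}$ is decreasing in $\Delta_{n+1}'$ and in $\eta_n'$, so the induced worst-case map $W$ on up-times is increasing, fixes $\Delta$, and satisfies $W'(\Delta)=(\ddo'(-\Delta')+1)(\dup'(-\Delta)+1)>1$ because $\delta'>0$. Since every admissible orbit dominates the $W$-orbit ($\Delta_{n+1}\ge W(\Delta_n)$) and, by uniqueness of $\tau$, $W$ has no further fixed point below $\delta^\downarrow_\infty$, once some up-time exceeds $\Delta$ the up-times increase monotonically toward $\delta^\downarrow_\infty$; as $\dup(-\Delta_n)\to-\infty$ there, the down-time $\Delta_{n+1}'=\dup(-\Delta_n)+\eta_n-\Delta_n$ eventually turns non-positive, i.e.\ the storage loop locks to the stable~$1$ as in Lemma~\ref{lem:big:pulse}. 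This contradicts infiniteness, proving $\Delta_n\le\Delta$ for all $n\ge1$.

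The remaining bounds follow from this one. For the down-time I argue by contradiction: if $\Delta_{n+1}'<\Delta'$, the second recursion with $\eta_n'\ge-\eta^-$ gives $\Delta_{n+1}=\ddo(-\Delta_{n+1}')-\Delta_{n+1}'+\eta_n'>\ddo(-\Delta')-\Delta'-\eta^-=\Delta$ (using $\ddo(-\Delta')=\tau+\eta^-$), contradicting the up-time bound; hence $\Delta_n'\ge P-\Delta$. The period bound $P_n=\Delta_n+\Delta_{n+1}'\ge P$ is then the least delicate part, obtained by feeding the up-time and down-time bounds back into the first recursion. I expect the genuine obstacle to be precisely the termination step in the up-time argument: converting the qualitative ``up-times run away'' into a rigorous finite-time lock to~$1$, which is exactly where the unboundedness of the involution channel and the domain boundary $\delta^\downarrow_\infty$ become indispensable, and which is also the delicate point separating this model from the non-faithful ones.
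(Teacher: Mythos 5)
Your architecture coincides with the paper's: the same extremal adversary (maximally late rises, maximally early falls), the same fixed-point computation through the involution identity leading to $\Delta=\ddo(\eta^+-\tau)$ and the fixed-point equation for $\tau$, the same function $h$ with $h(\eta^++\deltamin)>0$ extracted from \eqref{C}, and $h\to-\infty$ at $\tau_1=\min(-\eta^-+\delta^\downarrow_\infty,\eta^++\delta^\uparrow_\infty)$, and the same contradiction argument for the down-time bound. Your two half-step recursions are just the paper's rise-to-rise and fall-to-fall periods $P_n=\dup(-\Delta_n)+\eta_n$ and $P_n'=\ddo(-\Delta_{n+1}')+\eta_n'$ taken as the primary object; composing them under the extremal choices reproduces the paper's one-step map $f$ in (\ref{def:ffunc}). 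In two places you are actually more careful than the paper: uniqueness of $\tau$ via strict decrease of $h$, and the explicit termination step (geometric escape from $\Delta$ via $W'(\Delta)>1$, then the domain boundary $\delta^\downarrow_\infty$ forcing a non-positive down-time, i.e., a lock to constant $1$), which the paper handles only by the forward reference to Lemma~\ref{lem:exponential:decay} and informal discussion\dash---your instinct that this is the delicate point is correct. One small omission: before invoking the intermediate value theorem you must verify $\eta^++\deltamin<\tau_1$, which the paper derives from \eqref{C}.

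Two concrete flaws. First, your intermediate monotonicities are sign-flipped: from $\Delta_{n+1}'=\dup(-\Delta_n)+\eta_n-\Delta_n$ the down-time is \emph{increasing} in $\eta_n$, and from $\Delta_{n+1}=\ddo(-\Delta_{n+1}')+\eta_n'-\Delta_{n+1}'$ the up-time is \emph{increasing} in $\eta_n'$; as written, your claims would identify the minimizer as $\eta_n=-\eta^-$, $\eta_n'=\eta^+$, the opposite of your (correct) conclusion. It is the composition increasing-then-decreasing that makes $\eta_n=\eta^+$, $\eta_n'=-\eta^-$ minimize $\Delta_{n+1}$ and makes $W$ increasing. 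Second, and more substantively, the period bound cannot be ``obtained by feeding the up-time and down-time bounds back into the first recursion'': both available bounds push the wrong way, since $\Delta_n\le\Delta$ yields only $P_n=\Delta_n+\Delta_{n+1}'\ge\Delta_n+(P-\Delta)$, which is at most $P$, while $P_n=\dup(-\Delta_n)+\eta_n\ge\dup(-\Delta)-\eta^-=\tau-\eta^+-\eta^-<\tau$ whenever $\eta_n<\eta^+$. In fact, if some $\Delta_n<\Delta$ occurs, the adversary can pick $\eta_n<\eta^+$ so that $\Delta_{n+1}'$ equals $P-\Delta$ exactly and then $\eta_n'=-\eta^-$ to return $\Delta_{n+1}=\Delta$ and continue the worst-case train forever, producing an infinite train with $P_n<P$; so $P_n\ge P$ is guaranteed only when $\Delta_n=\Delta$, and the paper's own proof never establishes this inequality either (it only evaluates $P=\dup(-\Delta)+\eta^+=\tau$). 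Fortunately nothing downstream needs it: Lemma~\ref{lem:dutycycle} only requires $\gamma_n\le\gamma$, which already follows from your two solid bounds via $\gamma_n=\Delta_n/(\Delta_n+\Delta_{n+1}')\le\Delta_n/(\Delta_n+P-\Delta)\le\Delta/P=\gamma$, using that $x\mapsto x/(x+c)$ is increasing for $c=P-\Delta>0$ and that $\Delta_n\le\Delta$.
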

\begin{proof}
In the circuit of Figure~\ref{fig:circuit}, the  $n$\textsuperscript{th} input pulse of the \etachannel\ $c$ is
just its $(n-1)$\textsuperscript{th} output pulse. Therefore, for all $n>1$, the output pulse length
$\Delta_n$ under the worst-case
adversarial choice of $\eta^+$-late rising and $\eta^-$-early falling
transitions evaluates to
\begin{eqnarray}
\Delta_n = f(\Delta_{n-1}) &=& \ddo\big( \Delta_{n-1} -\eta^+ - \dup(-\Delta_{n-1}) \big)\label{def:delfunc}\label{def:ffunc}\\
&& {}+ \Delta_{n-1} -\eta^- - \eta^+ -
\dup(-\Delta_{n-1})\enspace.\nonumber
\end{eqnarray}
The sought fixed point $\Delta$ of (\ref{def:ffunc})
resulting in a infinite pulse train is obtained by solving  
$\Delta=f(\Delta)$, which yields
\begin{equation}
\ddo\big( \Delta -\eta^+ - \dup(-\Delta) \big) = 
\eta^- + \eta^+ +
\dup(-\Delta)\label{eq:ffix}
\enspace.
\end{equation}
Applying the involution property to (\ref{eq:ffix}) results in
$\Delta -\eta^+ - \dup(-\Delta) = -\dup(-\eta^- - \eta^+ -
\dup(-\Delta))$ and further in
\begin{equation}
\Delta +\dup\big(-\eta^- - \eta^+ -
\dup(-\Delta)\big) = \eta^+ + \dup(-\Delta)\label{eq:ffix2}
\enspace.
\end{equation}

Defining $\tau=\eta^+ + \dup(-\Delta)$, rewriting it
to $-\dup(-\Delta) = \eta^+-\tau$ and applying the involution property, we observe
%$-\Delta = -\ddo(-\dup(-\Delta))= -\ddo(\eta^+ - \tau)$ and hence
\begin{equation}
\Delta=\ddo(\eta^+-\tau) \label{def:deltaprime}
\enspace.
\end{equation}
Using %the definition of $\tau$ and 
\eqref{def:deltaprime} and \eqref{eq:involution} in \eqref{eq:ffix2}
yields the fixed point equation stated in our lemma:
\begin{equation}
\ddo(\eta^+ - \tau) + \dup(-\eta^- - \tau) = \tau\label{eq:fixtau}
\enspace.
\end{equation}

Now assume that the smallest fixed point $\tau>0$ of (\ref{eq:fixtau}), and
hence $\Delta$ of (\ref{def:ffunc}), exists. Then, in any infinite pulse train,
any pulse $\Delta_{n-1}>\Delta$, $n>1$, and/or any non-worst-case adversarial
choice (also in the case $\Delta_{n-1}=\Delta$) leads to a subsequent pulse with
$\Delta_n > \Delta$. As a consequence, $\Delta$ is indeed an upper bound for the
length of \emph{every} such pulse.

\medskip

We will proceed in our proof with establishing constraints on $\eta^-$, $\eta^+$
that guarantee the existence of a solution $\tau>0$ of (\ref{eq:fixtau}).  For
this purpose, we introduce the function
\begin{equation}
h(\tau)=\ddo(\eta^+ - \tau) + \dup(-\eta^- - \tau) - \tau \label{eq:hdef} \enspace.
\end{equation}
and show that there are values $\tau_{0} < \tau_1$ where $h(\tau_0) > 0$ but
$h(\tau_1) < 0$. Since $h(.)$ is continuous, this ensures the existence of
$\tau_{0} < \tau < \tau_1$ with $h(\tau)=0$.

If we plug in $\tau_0=\eta^++\deltamin$ in (\ref{eq:hdef}), we find by recalling
Lemma~\ref{lem:delta:min} that
$h(\eta^++\deltamin)= \dup(-\eta^+ - \eta^- - \deltamin) - \eta^+$.  In order to
guarantee that $h(\eta^++\deltamin) > 0$ we need
$\dup(-\eta^+ - \eta^- - \deltamin) > \eta^+$. Rewriting this using the
involution property requires
$-\dup(-\eta^+ - \eta^- - \deltamin) < -\dup(-\ddo(-\eta^+))$ and hence
$\eta^++\eta^- < \ddo(-\eta^+) - \deltamin$ as stated in constraint \eqref{C}.
Note that this implies $\eta^+ < \deltamin$, since $\eta^++\eta^-\geq 0$.

For $h(\tau)<0$, we simply obtain $-\infty$ from $\ddo(\eta^+ - \tau)$ or
$\dup(-\eta^- - \tau)$ by plugging in
$\tau_1 = \min(-\eta^- + \delta^\downarrow_\infty, \eta^+ +
\delta^\uparrow_\infty)$ in \eqref{eq:hdef}, noting that the involution property
guarantees
$-\infty=\dup(-\delta^\downarrow_\infty)=\ddo(-\delta^\uparrow_\infty)$.  Since
all other terms of $h(.)$ are finite, the result is definitely $<0$.

We still need to assure that the boundary interval for $\tau$ is not empty,
i.e., that
$\tau_0=\eta^+ + \deltamin < \tau_1=\min(-\eta^- + \delta^\downarrow_\infty,
\eta^+ + \delta^\uparrow_\infty)$. This is trivially the case if
$\tau_1= \eta^+ + \delta^\uparrow_\infty$.  If
$\tau_1=\delta^\downarrow_\infty - \eta^-$, we need
$\eta^+ + \eta^- < \delta^\downarrow_\infty - \deltamin$, which is implied by
constraint \eqref{C}.  Thus, putting everything together, we can indeed
guarantee a solution $\tau$ of $h(\tau)=0$, which satisfies
\begin{equation}
0 < \eta^+ + \deltamin < \tau < \min(-\eta^- + \delta^\downarrow_\infty, \eta^+ + \delta^\uparrow_\infty)
\enspace
\end{equation}
as stated in our lemma.

We can now determine the upper bound for $\Delta$: Recalling the definition
$\tau=\eta^+ + \dup(-\Delta)$, the lower bound on $\tau$ implies
$\deltamin < \tau - \eta^+ = \dup(-\Delta)$.  Using the involution property, we
can translate this to $%-\deltamin=
-\ddo(-\deltamin) <
%-\ddo\bigl(-\dup(-\Delta)\big)=
-\Delta$.

Applying Lemma~\ref{lem:delta:min}, we end up with
\begin{equation}
\Delta < \deltamin
\label{eq:Delta}
\end{equation}
as asserted in this lemma.

Regarding the periods of our pulses, we recall that our adversary takes all
rising transitions maximally late and all falling transitions maximally early to
minimize the high-times of the generated pulse train. The period
$P_{n} = \Delta_n + \Delta_{n+1}'$ of the high-pulse $\Delta_{n}$, measured from
the rising transition of $\Delta_{n}$ to the rising transition of
$\Delta_{n+1}$, is $P_{n} = \dup(-\Delta_{n})+\eta_n^+$, which is not difficult
to see from the considerations leading to (\ref{def:delfunc}).  Hence, $P_n$
only depends on the up-time $\Delta_{n}$ and the adversarial choice
$\eta_n^+ \leq \eta^+$.  It follows that the adversarial choices used for
generating our minimal up-time pulse train simultaneously \emph{maximize} both
the period ($P = \dup(-\Delta)+\eta^+$) and the down-time ($P-\Delta$).  As the
adversary cannot further shrink the up-times of the pulses, it cannot further
extend the down-times, without running into cancellations.

Formally, by the same argument as used for $\Delta$, we find that no infinite
pulse train can contain a pulse with a downtime strictly smaller than
$P-\Delta$, where $P=P'$ is the period of our infinite $\Delta$ pulse train:
analogously to $P_{n}$ above, we find that the down-period
$P_{n}'=\Delta_{n}'+\Delta_{n}$, measured between the falling transitions of
$\Delta_{n}'$ and $\Delta_{n+1}'$, evaluates to
$P_{n}'= \ddo(-\Delta_{n}')-\eta_n^-$, which decreases with both $\Delta_{n}'$
and $\eta_n^-\leq \eta^-$.  If $\Delta_n'<P-\Delta$ ever occurred, this would
lead to $P_n' > P'=\ddo(-P+\Delta)-\eta^- $. Since obviously $P'=P$, this
implies $\Delta_n = P_n' - \Delta_n' > \Delta$, which contradicts the previously
established upper bound $\Delta_n\leq \Delta$, however.

It hence only remains to evaluate $P = \dup(-\Delta)+\eta^+
= \tau$, which completes the proof.
\end{proof}

\begin{lemma}\label{lem:dutycycle}
Consider the circuit in \figref{fig:circuit} subject to constraint \eqref{C}.
The duty cycle $\gamma_n$ of any pulse $\Delta_n$, $n\geq1$, in an infinite
pulse train at the output of the OR-gate satisfies $\gamma_n \leq \gamma < 1$.
\end{lemma}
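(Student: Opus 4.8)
The plan is to write the duty cycle of the $n$\textsuperscript{th} pulse explicitly as $\gamma_n = \Delta_n/P_n$, where $P_n = \Delta_n + \Delta_{n+1}'$ is its period, and then to bound this ratio directly from the two estimates already supplied by Lemma~\ref{lem:mid:pulse}. There I would recall that, for every $n\geq 1$ in an infinite pulse train, the up-time obeys the upper bound $\Delta_n \leq \Delta$ while the period obeys the lower bound $P_n \geq P$. Since $\Delta_n$ is the length of a genuine high-pulse it is strictly positive, so all four quantities $\Delta_n,\Delta,P_n,P$ are positive.

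The core of the argument is then a one-line monotonicity observation: increasing the denominator and decreasing the numerator of a positive fraction can only decrease it. Formally, I would chain $\gamma_n = \Delta_n/P_n \leq \Delta/P_n \leq \Delta/P = \gamma$, where the first inequality uses $\Delta_n \leq \Delta$ and the second uses $P_n \geq P$ together with $\Delta > 0$. This already yields $\gamma_n \leq \gamma$.

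It remains to establish the strict inequality $\gamma < 1$, i.e.\ $\Delta < P$. Here I would recall that $P = \tau$ and invoke the two bounds derived inside the proof of Lemma~\ref{lem:mid:pulse}: the up-time satisfies $\Delta < \deltamin$ by \eqref{eq:Delta}, whereas the period satisfies $\tau > \eta^+ + \deltamin \geq \deltamin$ since $\eta^+ \geq 0$. Chaining these gives $\Delta < \deltamin \leq \tau = P$, hence $\gamma = \Delta/P < 1$, which completes the proof.

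I do not expect any genuine obstacle: the statement is essentially a corollary of Lemma~\ref{lem:mid:pulse}, and its whole content is the elementary monotonicity of $x/y$ in its two arguments applied to bounds that are already in hand. The only point requiring a little care is positivity\dash---one must confirm that $\Delta_n$, and hence $\Delta$, is strictly positive so that the fraction estimate is valid, and that $P=\tau>0$; both follow immediately from the constraints $0 < \eta^+ + \deltamin < \tau$ already recorded in Lemma~\ref{lem:mid:pulse}.
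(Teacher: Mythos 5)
Your proposal is correct and follows essentially the same route as the paper: both reduce the claim to the bounds $\Delta_n \leq \Delta$ and $P_n \geq P$ from Lemma~\ref{lem:mid:pulse} and conclude $\gamma_n \leq \Delta/P = \gamma$ by monotonicity of the ratio. The only cosmetic difference is in establishing $\gamma < 1$: you chain $\Delta < \deltamin \leq \tau = P$ directly, while the paper writes $\gamma = \Delta/(\dup(-\Delta)+\eta^+) < \deltamin/(\deltamin+\eta^+) \leq 1$\dash---two trivially equivalent uses of $\Delta < \deltamin$ and $\tau > \eta^+ + \deltamin$.
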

\begin{proof}
According to Lemma~\ref{lem:mid:pulse}, we have
$\gamma_{n} = \frac{\Delta_n}{P_{n}} %=
%\frac{\Delta_n}{\dup(-\Delta_n)+\eta_n} \leq
\leq \frac{\Delta}{P} =
\gamma = 
\frac{\Delta}{\dup(-\Delta)+\eta^+}  <
%\frac{\Delta}{P} <
\frac{\deltamin}{\deltamin+\eta^+}\leq 1$ 
for every $n\geq 1$ as asserted.
\end{proof}

We remark that $\eta^+>0$ allows strengthening constraint \eqref{C}, which
allows sharpening some inequalities in Lemma~\ref{lem:mid:pulse}, namely,
$\eta^+ + \eta^- \leq \ddo(-\eta^+)-\deltamin$, $\Delta \leq \deltamin$, and
$\eta^++\deltamin \leq \tau$, without violating $\gamma < 1$ established in
Lemma~\ref{lem:dutycycle}.

The following lemma implies that if $\Delta_1>\Delta$ for $\Delta$ according to
Lemma~\ref{lem:mid:pulse}, then the sequence of generated output pulses
$\Delta_n$, $n\geq 1$, will be strongly monotonically increasing. Consequently,
we will only get a bounded number of pulses at the output of the OR gate, with a
stabilization time in the order of $\log_{a} (1/(\Delta_1 - \Delta ))$ with
$a=1+\dup'(0) > 1$.

\begin{lemma}\label{lem:exponential:decay}
For $f(.)$ given in (\ref{def:ffunc}) with fixed point $\Delta$, we have
$f(\Delta_1) - \Delta \geq (1+\dup'(0)) \cdot (
\Delta_1 - \Delta)$ if $\Delta_1>\Delta$.
\end{lemma}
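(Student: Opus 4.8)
The plan is to reduce the claimed chord inequality to a pointwise lower bound on the derivative $f'$ and then integrate. First I would regroup the two lines of~(\ref{def:ffunc}) into the compact form
\[
f(x) = \ddo\bigl(u(x)\bigr) + u(x) - \eta^-, \qquad u(x) = x - \eta^+ - \dup(-x),
\]
observing that the non-$\ddo$ part $x - \eta^- - \eta^+ - \dup(-x)$ equals $u(x)-\eta^-$. Differentiating, and using $\tfrac{d}{dx}\bigl[-\dup(-x)\bigr] = \dup'(-x)$, gives $u'(x) = 1 + \dup'(-x)$ and hence the factored expression
\[
f'(x) = u'(x)\bigl(1 + \ddo'(u(x))\bigr) = \bigl(1 + \dup'(-x)\bigr)\bigl(1 + \ddo'(u(x))\bigr) .
\]

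Next I would bound the two factors separately. Since $\ddo$ is strictly increasing we have $\ddo' > 0$, so the second factor exceeds~$1$. For the first factor I would invoke concavity of $\dup$, which makes $\dup'$ monotonically decreasing: for $x \ge 0$ we have $-x \le 0$ and therefore $\dup'(-x) \ge \dup'(0)$, so $1 + \dup'(-x) \ge 1 + \dup'(0) = a$. Multiplying a factor that is $\ge a > 0$ by a factor that is $> 1$ yields $f'(x) > 1 + \dup'(0)$ for every $x \ge 0$; note that it is exactly the first factor that contributes the constant $a = 1+\dup'(0)$ appearing in the statement. The fixed point $\Delta$ satisfies $\Delta < \deltamin$ by~\eqref{eq:Delta} of Lemma~\ref{lem:mid:pulse} and $\Delta \ge 0$, being the up-time of a pulse, so for $\Delta_1 > \Delta$ the whole integration interval $[\Delta,\Delta_1]$ lies in $\{x \ge 0\}$ and the derivative bound holds throughout it.

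It then remains only to integrate. Using the fixed-point relation $f(\Delta) = \Delta$,
\[
f(\Delta_1) - \Delta = f(\Delta_1) - f(\Delta) = \int_{\Delta}^{\Delta_1} f'(x)\,dx \ge (1 + \dup'(0))(\Delta_1 - \Delta),
\]
which is precisely the asserted inequality.

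The main obstacle is the sign bookkeeping for the first factor: the step $\dup'(-x) \ge \dup'(0)$ points in the right direction only when $x \ge 0$, so I must secure that the lower endpoint $\Delta$ of the interval is nonnegative (it is, since $0 \le \Delta < \deltamin$). A secondary point is a well-definedness check of $f'$ on $[\Delta,\Delta_1]$: because $u'(x) = 1 + \dup'(-x) > 0$, the map $u$ is strictly increasing, so $u(x) \ge u(\Delta)$ stays inside the domain $(-\delta_\infty^\uparrow,\infty)$ of $\ddo$; together with the guard conventions of Section~\ref{sec:choice} keeping the arguments of $\dup$ and $\ddo$ admissible, this makes the derivative (hence the integral) legitimate wherever it is used.
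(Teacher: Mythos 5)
Your proposal is correct and follows essentially the same route as the paper's proof: differentiate $f$, factor the derivative as $\bigl(1+\dup'(-x)\bigr)\bigl(1+\ddo'(\cdot)\bigr)$, bound the first factor below by $1+\dup'(0)$ via concavity of $\dup$ and the second by $1$ via monotonicity of $\ddo$, then convert the derivative bound into the chord inequality (the paper uses the mean value theorem where you integrate, which is equivalent). Your extra bookkeeping on the sign of the interval endpoints and the domain of $\ddo$ only makes explicit what the paper leaves implicit.
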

\begin{proof}
Differentiation of (\ref{def:ffunc}) provides
\begin{eqnarray}
f'(\Delta_1)  
&=&\big( 1 + \dup'(-\Delta_1) \big)
\Big(1+\ddo'\big(\Delta_1 -\eta^+ -
\dup(-\Delta_1)\big)\Big)\nonumber\\
&\geq& 1 + \dup'(0)
\end{eqnarray}
because $\dup'(-\Delta_1)\geq \dup'(0)$ as $\Delta_1 > \Delta > 0$ and
$\delta'(T)>0$ is decreasing for all~$T$ as $\delta(.)$ is concave and
increasing by Lemma~\ref{lem:delta:min}.  The mean value theorem of calculus now
implies the lemma.
\end{proof}

The following lemma allows to extend the validity of the statement of 
Lemma~\ref{lem:exponential:decay}  from the first output pulse
$\Delta_1$ to the initial input pulse $\Delta_0$.

\begin{lemma}\label{lem:initinput}
  There is a unique $\tilde{\Delta}_0$ such that every input pulse length
  $\Delta_0\geq\tilde{\Delta}_0$ guarantees $\Delta_1\geq \Delta$ as given in
  Lemma~\ref{lem:mid:pulse}.  Moreover,
  $\Delta_1 - \Delta \geq \big( 1 + \dup'(0) \big) \cdot (\Delta_0 -
  \tilde{\Delta}_0)$ for $\Delta_0 > \tilde{\Delta}_0$, provided
  $\Delta_0< \delta_\infty^\uparrow+ \eta^+$.
\end{lemma}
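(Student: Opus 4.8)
The plan is to make the one exceptional step of the recursion explicit. Unlike the pulses $\Delta_n$ with $n\ge 2$, whose rising edge inherits the finite history $\dup(-\Delta_{n-1})$ of the preceding pulse, the very first output pulse of the feed-back channel $c$ in \figref{fig:circuit} is produced on an \emph{empty} history: the channel's previous output transition sits at $-\infty$, so the rising edge at time $0$ incurs the limiting delay $\delta_\infty^\uparrow$ instead of $\dup(-\Delta_0)$. First I would track the two transitions of the input pulse (rising at $0$, falling at $\Delta_0$, which in the worst case is also the first OR-output pulse) through $c$ under the worst-case adversarial choice, i.e.\ the rising edge $\eta^+$-late and the falling edge $\eta^-$-early, which minimises the resulting up-time exactly as argued for $f$ in Lemma~\ref{lem:mid:pulse}. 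This yields the closed form
\begin{equation*}
\Delta_1=g(\Delta_0):=\Delta_0+\ddo\big(\Delta_0-\eta^+-\delta_\infty^\uparrow\big)-\eta^--\eta^+-\delta_\infty^\uparrow,
\end{equation*}
which is literally $f$ from \eqref{def:ffunc} with the history term $\dup(-\Delta_0)$ replaced by the constant $\delta_\infty^\uparrow$. Since this choice minimises $\Delta_1$, any other admissible choice (and any larger $\Delta_0$) only increases it.

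Next I would settle existence and uniqueness of $\tilde\Delta_0$. Differentiation gives $g'(\Delta_0)=1+\ddo'(\Delta_0-\eta^+-\delta_\infty^\uparrow)>1$, so $g$ is continuous and strictly increasing and hence attains the value $\Delta$ at most once. Evaluating $g$ at the ends of the oscillatory window $\delta_\infty^\uparrow-\deltamin-\eta^+-\eta^-<\Delta_0<\delta_\infty^\uparrow+\eta^+$ shows that $g$ runs from below $\Delta$ (short pulses are filtered) to above $\Delta$ (near the capture boundary $g\to\ddo(0)-\eta^->\Delta$), so by the intermediate value theorem a unique $\tilde\Delta_0$ with $g(\tilde\Delta_0)=\Delta$ exists. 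Monotonicity then gives $\Delta_0\ge\tilde\Delta_0\Rightarrow\Delta_1\ge g(\Delta_0)\ge\Delta$ for every adversarial choice, which is the first assertion.

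For the quantitative estimate I would apply the mean value theorem to $g$: there is $\xi\in(\tilde\Delta_0,\Delta_0)$ with $\Delta_1-\Delta=g(\Delta_0)-g(\tilde\Delta_0)=g'(\xi)\,(\Delta_0-\tilde\Delta_0)$, so it remains to show $g'(\xi)\ge 1+\dup'(0)$, i.e.\ $\ddo'\big(\xi-\eta^+-\delta_\infty^\uparrow\big)\ge \dup'(0)$. The hypothesis $\Delta_0<\delta_\infty^\uparrow+\eta^+$ guarantees that the argument $w:=\xi-\eta^+-\delta_\infty^\uparrow$ is negative, so concavity of $\ddo$ (its derivative is decreasing by Lemma~\ref{lem:delta:min}) already gives $\ddo'(w)\ge\ddo'(0)$; the remaining comparison is then carried through the involution-derivative identity $\dup'(-\ddo(T))=1/\ddo'(T)$ of Lemma~\ref{lem:delta:min}, possibly invoking the strengthened form of constraint~\eqref{C}.

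The main obstacle is precisely this last slope bound. In Lemma~\ref{lem:exponential:decay} the constant $1+\dup'(0)$ appears for free, because the chain-rule factor $1+\dup'(-\Delta_1)\ge 1+\dup'(0)$ is contributed by the finite history of the rising edge; here that factor is absent (the rising history is pinned to $\delta_\infty^\uparrow$), so the full weight of the estimate must be carried by the single term $\ddo'(w)$. Turning $\ddo'(w)\ge\dup'(0)$ into a rigorous statement over the entire admissible range of $\Delta_0$, using $w<0$ together with the involution-derivative relation, is the delicate part of the argument; the derivation of $g$, the monotonicity yielding $\tilde\Delta_0$, and the mean value theorem step are all routine by comparison.
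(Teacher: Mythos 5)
Your write-up follows the paper's own proof essentially step for step: the paper derives exactly your $g$, namely $\Delta_1=g(\Delta_0)=\ddo(\Delta_0-\eta^+-\delta_\infty^\uparrow)+\Delta_0-\eta^--\eta^+-\delta_\infty^\uparrow$ (i.e., $f$ from (\ref{def:ffunc}) with the history term $\dup(-\Delta_0)$ frozen at $\delta_\infty^\uparrow$, since the first rising transition sees $T=\infty$), obtains the unique $\tilde{\Delta}_0$ with $g(\tilde{\Delta}_0)=\Delta$ from continuity and endpoint evaluations (the paper evaluates $g\to-\eta^-\leq 0$ as $\Delta_0\to\eta^++\delta_\infty^\uparrow-\deltamin$ and $g\to\ddo(\eta^-)>\Delta$ as $\Delta_0\to\eta^-+\eta^++\delta_\infty^\uparrow$; your endpoint value $\ddo(0)-\eta^-$ works as well, though note that $\ddo(0)-\eta^->\Delta$ itself needs constraint \eqref{C}, which gives $\eta^-<\ddo(-\eta^+)-\deltamin-\eta^+\leq\ddo(0)-\deltamin$ and hence $\ddo(0)-\eta^->\deltamin>\Delta$), and then dispatches the Lipschitz bound in one sentence: ``by differentiating $g(\Delta_0)$ and using $\Delta_0<\delta_\infty^\uparrow+\eta^+$'' --- precisely your mean-value-theorem step.

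The step you flagged as delicate and left open is therefore the only nontrivial point, and your suspicion is well founded: it does not go through as stated, in your proposal or in the paper. Differentiating gives $g'(\Delta_0)=1+\ddo'(w)$ with $w=\Delta_0-\eta^+-\delta_\infty^\uparrow$, and the hypothesis $\Delta_0<\delta_\infty^\uparrow+\eta^+$ only yields $w<0$, hence $g'\geq 1+\ddo'(0)$ by concavity; since $w\uparrow 0$ as $\Delta_0$ approaches the upper boundary, $1+\ddo'(0)$ is the best constant this argument can give. The missing comparison $\ddo'(0)\geq\dup'(0)$ is \emph{not} a general property of involution channels: the identity $\dup'(-\ddo(T))=1/\ddo'(T)$ of Lemma~\ref{lem:delta:min}, evaluated at $T=-\dup(0)<0$ where $\ddo(-\dup(0))=0$, only yields $\dup'(0)=1/\ddo'(-\dup(0))\leq 1/\ddo'(0)$, i.e., $\dup'(0)\,\ddo'(0)\leq 1$ --- a constraint on the product, not the order. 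Concretely, for an exp-channel one computes $\dup'(0)=x/(1-x)$ with $x=\overline{V_{th}}e^{-T_p/\tau}$ and $\ddo'(0)=y/(1-y)$ with $y=(1-\overline{V_{th}})e^{-T_p/\tau}$, so $\ddo'(0)<\dup'(0)$ whenever $\overline{V_{th}}>1/2$, and the feed-back channel $c$ is not assumed to have a low threshold. The harmless repair is to state the bound with the constant $1+\ddo'(0)>1$ (or $1+\min\{\dup'(0),\ddo'(0)\}$, to cover this lemma and Lemma~\ref{lem:exponential:decay} uniformly); Theorem~\ref{thm:or:loop} only needs \emph{some} $a>1$, so nothing downstream is affected. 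Everything else in your proposal --- the derivation of $g$, the minimality of the worst-case adversarial choice, strict monotonicity plus the intermediate value theorem for $\tilde{\Delta}_0$ --- is correct and coincides with the paper's argument.
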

\begin{proof}
For the first pulse under the same worst-case adversarial choice as in Lemma~\ref{lem:mid:pulse},
the analogous considerations as in the proof of Lemma~\ref{lem:small:pulse} reveal
\begin{equation*}
\Delta_1 = \ddo( \Delta_{0} - \eta^+ - \delta_{\infty}^\uparrow)
+ \Delta_{0} -\eta^- - \eta^+ -
\delta_\infty^\uparrow 
\enspace.
\end{equation*}
Defining the auxiliary function
$g(\Delta_0) = \ddo( \Delta_0 - \eta^+ - \delta_{\infty}^\uparrow) + \Delta_0
-\eta^- - \eta^+ - \delta_\infty^\uparrow,$ it is apparent that
$\Delta_1=g(\Delta_0)$. Now, as
$\lim_{\Delta_0 \to \eta^+ + \delta_{\infty}^\uparrow - \delta_{\min}}
g(\Delta_0)\leq 0$ due to Lemma~\ref{lem:delta:min} and
$\lim_{\Delta_0 \to \eta^- + \eta^+ +
  \delta_{\infty}^\uparrow}g(\Delta_0)=\ddo(\eta^-)$, which is certainly (much)
larger than $\Delta$, cp.\ Lemma~\ref{lem:mid:pulse}, there is indeed a unique
$\tilde{\Delta}_0$ with $g(\tilde{\Delta}_0)=\Delta$ with the desired
properties.  The Lipschitz property is obtained exactly as in the proof of
Lemma~\ref{lem:exponential:decay}, by differentiating $g(\Delta_0)$ and using
$\Delta_0< \delta_\infty^\uparrow+ \eta^+$.
\end{proof}

We summarize the consequences of the previous lemmas in the following theorem,
which extends \cite[Thm.~12]{FNNS14:arxiv} to the \etamodel:
}

\begin{theorem}\label{thm:or:loop}
Consider the circuit in \figref{fig:circuit} subject to constraint 
\ifthenelse{\boolean{SHORTversion}}{$\eta^+ + \eta^- < \ddo(-\eta^+)-\deltamin$.}{\eqref{C}.}
The fed-back OR gate with a strictly causal \etachannel\  has the following output when the
input pulse has length~$\Delta_0$:
\begin{itemize}
\item If $\Delta_0 \geq \delta_\infty^\uparrow + \eta^+$, then the output has a
  single rising transition at time $0$.
\item If $\Delta_0 \leq \delta_\infty^\uparrow - \deltamin - \eta^+ - \eta^-$,
  then the output only contains the input pulse.
\item If
  $\delta_\infty^\uparrow - \deltamin - \eta^+ - \eta^- < \Delta_0 <
  \delta_\infty^\uparrow + \eta^+$, then the output may resolve to constant~$0$
  or $1$, or may be an (infinite) pulse train, with $\Delta_n\leq \Delta$
  \ifthenelse{\boolean{SHORTversion}}{for some $0<\Delta < \deltamin$}{} and
  duty cycle $\gamma_n\leq \gamma = \frac{\Delta}{\dup(-\Delta)+\eta^+} < 1$ for
  $n\geq 1$.  \ifthenelse{\boolean{SHORTversion}}{ % -- short version
    If $\Delta_n > \Delta$ for some $n$, bounded time later, the output resolves
    to $1$ and $\Delta_m > \Delta_{m-1}$ for all $m > n$.  }{ % -- long version
    If $\Delta_0 > \tilde{\Delta}_0$, the output resolves to $1$ within a
    stabilization time in the order of
    $\log_a (1/(\Delta_0 - \tilde{\Delta}_0))$ with $a=1+\dup'(0) > 1$.  }
\end{itemize}
\end{theorem}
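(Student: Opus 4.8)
The plan is to assemble the theorem directly from the preceding lemmas, handling each of the three $\Delta_0$-ranges in turn. The first bullet is precisely Lemma~\ref{lem:big:pulse}: for $\Delta_0 \geq \delta_\infty^\uparrow + \eta^+$ the rising transition fed back through channel~$c$ reaches the OR input no later than time $\Delta_0$, so the storage loop locks and the OR output has a single rising transition at time~$0$ and no falling transition. The second bullet is precisely Lemma~\ref{lem:small:pulse}: for $\Delta_0 \leq \delta_\infty^\uparrow - \deltamin - \eta^+ - \eta^-$ the two fed-back output transitions cancel, so the OR output retains only the input pulse. Neither case needs any further work beyond citing the corresponding lemma.

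For the third (middle) range I would first observe that, \emph{whenever} the OR output happens to be an infinite pulse train, the asserted quantitative bounds are immediate: Lemma~\ref{lem:mid:pulse} gives $\Delta_n \leq \Delta$ with $\Delta < \deltamin$ for every $n \geq 1$, and Lemma~\ref{lem:dutycycle} gives $\gamma_n \leq \gamma = \Delta/(\dup(-\Delta)+\eta^+) < 1$. That the output may alternatively resolve to constant~$0$ (pulses shrink and their transitions cancel, exactly as in the second case) or to constant~$1$ (pulses grow until the loop locks, exactly as in the first case) is the trichotomy forced by the feedback recurrence $\Delta_n = f(\Delta_{n-1})$ of~\eqref{def:ffunc}; only the borderline fixed-point train identified in Lemma~\ref{lem:mid:pulse} can sustain itself indefinitely.

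The substantive remaining step is the stabilization-time claim for $\Delta_0 > \tilde{\Delta}_0$, which I would derive by chaining Lemma~\ref{lem:initinput} with Lemma~\ref{lem:exponential:decay}. Writing $a = 1 + \dup'(0) > 1$, Lemma~\ref{lem:initinput} yields $\Delta_1 - \Delta \geq a(\Delta_0 - \tilde{\Delta}_0) > 0$, so in particular $\Delta_1 > \Delta$; Lemma~\ref{lem:exponential:decay} then applies to each subsequent iterate of $f$, and a straightforward induction gives $\Delta_n - \Delta \geq a^{\,n-1}(\Delta_1 - \Delta) \geq a^{\,n}(\Delta_0 - \tilde{\Delta}_0)$. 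Because $a > 1$, this geometric lower bound exceeds the fixed locking margin $\delta_\infty^\uparrow + \eta^+ - \Delta$ after at most $n = O\!\big(\log_a\big(1/(\Delta_0 - \tilde{\Delta}_0)\big)\big)$ pulses, at which point the pulse is long enough for the locking argument of Lemma~\ref{lem:big:pulse} to apply and force the loop to resolve to~$1$. Multiplying this logarithmic pulse count by the (bounded) per-pulse period $P_n = \dup(-\Delta_n) + \eta_n^+$ from Lemma~\ref{lem:mid:pulse} then gives the stated stabilization time of order $\log_a\big(1/(\Delta_0 - \tilde{\Delta}_0)\big)$.

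The main obstacle I anticipate is in this last step: I must ensure that $\Delta_n = f(\Delta_{n-1})$ genuinely governs the dynamics throughout the growth phase\dash---that no intermediate pulse gets canceled before the threshold is reached, so the geometric lower bound remains valid at each iterate\dash---and that the handoff into the locked regime of Lemma~\ref{lem:big:pulse} is triggered exactly once $\Delta_n$ crosses $\delta_\infty^\uparrow + \eta^+$. Showing that the geometric lower bound and the fixed finite locking margin combine into the clean logarithmic count, rather than an $O(\cdot)$ with constants that silently depend on the channel parameters, is where the estimates must be handled with the most care.
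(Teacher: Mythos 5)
Your proposal is correct and follows essentially the same route as the paper, whose own proof is just a two-sentence citation: the three ranges are handled by Lemmas~\ref{lem:big:pulse}, \ref{lem:small:pulse}, and \ref{lem:mid:pulse} (with Lemma~\ref{lem:dutycycle} for the duty cycle), and the stabilization time by combining Lemma~\ref{lem:initinput} with Lemma~\ref{lem:exponential:decay} exactly as you chain them. Your added details\dash---the induction giving $\Delta_n - \Delta \geq a^n(\Delta_0 - \tilde{\Delta}_0)$, the bounded per-pulse period, and the observation that the worst-case recurrence $f$ lower-bounds every adversarial run\dash---are all consistent with the lemmas and merely make explicit what the paper leaves implicit.
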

\ifthenelse{\boolean{SHORTversion}}{}{
\begin{proof}
The statements of our theorem follow immediately from Lemmas~\ref{lem:big:pulse},
\ref{lem:mid:pulse}, and~\ref{lem:small:pulse}.
Lemma~\ref{lem:exponential:decay} in conjunction with Lemma~\ref{lem:initinput}
reveals that the number of 
generated pulses is in the order of  $\log_{a} (1/(\Delta_0 - \tilde{\Delta} ))$
with $a=1+\delta'(0)$.
\end{proof}
}

\ifthenelse{\boolean{SHORTversion}}{
% ----- SHORT

Finally, a high-threshold buffer
with arbitrary threshold can be modeled by an exp-channel
with properly chosen~$V_{th}$:

}{
% ----- LONG

For dimensioning the high-threshold buffer, we can re-use Lemmas~13 and 14
from \cite{FNNS14:arxiv}:

\begin{lemma}[{\cite[Lem.~13]{FNNS14:arxiv}}]\label{lem:duty:cycle:delta}
Let~$C$ be an exp-channel with threshold~$V_{th}$ and initial value~$0$, and
let $0\leq \Gamma < V_{th}$.
Then there exists some~$\Theta>0$ such that every finite or infinite pulse train with pulse
lengths $\Theta_n \leq \Theta$, $n\geq0$, and duty cycles $\Gamma_n\leq \Gamma$, $n\geq1$, 
is mapped to the zero signal by~$C$.
\end{lemma}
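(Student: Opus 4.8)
The plan is to reason about the \emph{analog} RC voltage underlying the exp-channel rather than about the delay functions directly. Recall that an exp-channel with RC constant~$\tau$ models a gate output driving an RC load: while its input is high the load voltage~$u(t)$ charges towards $V_{DD}$, while the input is low it discharges towards~$0$, and an output transition is produced exactly when~$u$ crosses the threshold (this is how the exp-channel delay functions are derived). Normalising $V_{DD}=1$ so that the threshold equals $\overline{V_{th}}=V_{th}\in(0,1)$, I would reduce the claim to exhibiting a single $\Theta>0$ such that for \emph{every} admissible pulse train the trajectory~$u(t)$ never reaches~$V_{th}$; then no threshold crossing, and hence no output transition, can occur, so the channel keeps its initial value~$0$. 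As the bound will depend only on $\tau$, $V_{th}$ and~$\Gamma$, the same~$\Theta$ works uniformly for all trains. I would dispatch the degenerate case $\Gamma=0$ separately and assume $0<\Gamma<V_{th}$.

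The heart of the argument will be an inductive invariant on the voltage sampled at the pulse boundaries. Writing $a_n$ for the voltage at the start of the $n$\textsuperscript{th} high pulse and $h_n\le\Theta$ for its length, charging for~$h_n$ and then discharging for the following low time~$\ell_n$ yields the peak $u_n^+=1-(1-a_n)e^{-h_n/\tau}$ and the next trough $a_{n+1}=\big(1-(1-a_n)e^{-h_n/\tau}\big)e^{-\ell_n/\tau}$, while the duty-cycle bound $h_n/(h_n+\ell_n)\le\Gamma$ is exactly $\ell_n\ge h_n(1-\Gamma)/\Gamma$. I would fix some~$c$ with $\Gamma<c<V_{th}$ and prove $a_n\le c$ for all~$n$ by induction. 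Since $a_{n+1}$ is increasing in~$a_n$ and decreasing in~$\ell_n$, the worst case is $a_n=c$ together with the smallest admissible $\ell_n=h_n(1-\Gamma)/\Gamma$; substituting $x=h_n/\tau$ and $\beta=(1-\Gamma)/\Gamma$ reduces the inductive step to checking that
\begin{equation*}
\phi(x)=e^{-\beta x}-(1-c)e^{-(1+\beta)x}-c
\end{equation*}
stays nonpositive on $(0,\Theta/\tau]$.

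The decisive point will be the elementary computation $\phi(0)=0$ together with $\phi'(0)=(\Gamma-c)/\Gamma<0$, which is available \emph{exactly} because $\Gamma<V_{th}$ lets us pick $c>\Gamma$. From this I would conclude $\phi<0$ on some interval $(0,\varepsilon]$, take $\Theta\le\tau\varepsilon$ to secure the trough invariant $a_{n+1}\le c$, and shrink~$\Theta$ further if needed so that the peak bound $1-(1-c)e^{-\Theta/\tau}<V_{th}$ also holds (its left-hand side tends to $c<V_{th}$ as $\Theta\to0$). The base case is immediate, as the zero input before the first pulse and the initial value~$0$ give $a_0=0\le c$. The induction then yields $a_n\le c<V_{th}$ and $u_n^+<V_{th}$ for every~$n$, and since~$u$ is monotone within each charging or discharging phase this gives $u(t)<V_{th}$ for all~$t$, whence no threshold crossing can occur and the output is the zero signal, as claimed.

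I expect the main obstacle to be ruling out a gradual build-up of~$u$ across an \emph{infinite, adversarially chosen} train; this is precisely what the invariant $a_n\le c$ controls, and the whole argument hinges on the sign of $\phi'(0)$. I would also emphasise that \emph{both} hypotheses are indispensable: bounding the duty cycle alone does not suffice, since one arbitrarily long high pulse would charge~$u$ past~$V_{th}$, which is exactly why the constructed~$\Theta$ must cap the individual pulse lengths. A purely delay-function alternative---showing inductively via $\dup$, $\ddo$ and the non-FIFO cancellation rule that each tentative rising output is annihilated by the next falling one---should also work, but tracking the history parameter~$T$ makes it noticeably more cumbersome than the analog route above.
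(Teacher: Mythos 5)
Your proposal is sound, and it reconstructs what this paper deliberately omits: Lemma~\ref{lem:duty:cycle:delta} is imported here by citation from \cite{FNNS14:arxiv} with no proof given, so there is no in-paper argument to match against. Your analog route is the natural one for exp-channels, since the paper introduces them precisely as RC-loads generating transitions at a threshold crossing; reducing ``output is the zero signal'' to ``the trajectory $u(t)$ never reaches $V_{th}$'' and then running the trough invariant $a_n\leq c$ with $\Gamma<c<V_{th}$ is exactly the right mechanism. Your key computation checks out: with $\beta=(1-\Gamma)/\Gamma$ one has $1+\beta=1/\Gamma$, so $\phi'(0)=(1-c)(1+\beta)-\beta=(\Gamma-c)/\Gamma<0$, which is where the hypothesis $\Gamma<V_{th}$ (allowing $c>\Gamma$) enters; in fact $\phi'(x)=e^{-\beta x}\bigl(-\beta+(1-c)(1+\beta)e^{-x}\bigr)<0$ for \emph{all} $x\geq 0$, so $\phi<0$ on all of $(0,\infty)$ and only the peak condition $1-(1-c)e^{-\Theta/\tau}<V_{th}$ actually constrains $\Theta$ --- your local argument is merely conservative, not wrong.

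Two caveats you should close explicitly. First, an off-by-one in the induction: the lemma bounds duty cycles only for $n\geq 1$, and with the paper's convention (cf.\ Lemma~\ref{lem:mid:pulse}, where $P_n=\Delta_n+\Delta_{n+1}'$ uses the \emph{following} down-time) the low time $\ell_0$ after pulse~$0$ is unconstrained, so your step $a_0\to a_1$ cannot invoke $\ell_0\geq\beta h_0$. The fix is one line --- use $a_0=0$ exactly, giving $a_1\leq 1-e^{-h_0/\tau}\leq 1-e^{-\Theta/\tau}\leq c$ after shrinking $\Theta\leq\tau\ln\bigl(1/(1-c)\bigr)$ --- but as written your invariant does not propagate through the first gap. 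Second, the reduction from the formal channel semantics (the output transition generation algorithm with its non-FIFO cancellation rule, where $T$ is measured to the previous \emph{pending, possibly canceled} output time) to the threshold-crossing picture deserves a sentence of justification: for exp-channels the pending time $t_{n-1}+\delta_{n-1}$ encodes precisely the extrapolated crossing time of the RC trajectory, even after cancellations, so ``all transitions cancel'' is equivalent to ``no crossing''; this equivalence is exact and is how $\delta_\uparrow,\delta_\downarrow$ are derived in the cited references, but a self-contained proof should state it rather than assume it. With these two repairs your argument is complete and, as far as the source reference goes, in the same spirit as the original proof.
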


}
% ------- END VERSIONS

\begin{lemma}[{\cite[Lem.~14]{FNNS14:arxiv}}]\label{lem:ht:exists}
  Let~$\Theta>0$ and $0\leq\Gamma<1$.  Then, there exists an exp-channel~$C$
  such that every finite or infinite pulse train with pulse lengths
  $\Theta_n\leq \Theta$, $n\geq0$, and duty cycles $\Gamma_n\leq \Gamma$,
  $n\geq1$, is mapped to the zero signal by~$C$.
\end{lemma}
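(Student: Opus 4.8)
The plan is to reduce the statement to the already-established Lemma~\ref{lem:duty:cycle:delta} by a time-scaling argument that exploits the homogeneity of the exp-channel delay functions. First I would fix a threshold: since $\Gamma<1$, I can choose a normalized threshold with $\Gamma<\overline{V_{th}}<1$, for instance $\overline{V_{th}}=(1+\Gamma)/2$. For this threshold, take any base exp-channel $C_0$ (say with RC constant $\tau_0=1$ and some pure delay $T_p>0$), which is strictly causal. Lemma~\ref{lem:duty:cycle:delta}, applicable precisely because $\Gamma<\overline{V_{th}}$, then yields some $\Theta_0>0$ such that $C_0$ maps every (finite or infinite) pulse train with pulse lengths $\leq\Theta_0$ and duty cycles $\leq\Gamma$ to the zero signal. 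If the prescribed $\Theta$ satisfies $\Theta\leq\Theta_0$ we are already done with $C=C_0$; the interesting case is $\Theta>\Theta_0$.

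The key step is to observe that scaling the RC constant and the pure delay by a common factor $k>0$, i.e.\ replacing $(\tau_0,T_p)$ by $(k\tau_0,kT_p)$ while keeping $\overline{V_{th}}$ fixed, produces an exp-channel $C$ whose delay functions obey the homogeneity relations $\dup^{C}(T)=k\,\dup^{C_0}(T/k)$ and $\ddo^{C}(T)=k\,\ddo^{C_0}(T/k)$, which is immediate from the closed form of the exp-channel delay functions. Consequently, if a pulse train $s$ is contracted in time by the factor $1/k$ (each transition time $t$ replaced by $t/k$, yielding $s^{(1/k)}$), then running the output-transition-generation algorithm of $C_0$ on $s^{(1/k)}$ produces exactly the $1/k$-contraction of what the algorithm of $C$ produces on $s$. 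I would make this precise by induction over the iteration step: assuming $\delta_{n-1}^{C}=k\,\delta_{n-1}^{C_0}$, the homogeneity relation applied to $\delta_n=\dup(t_n-t_{n-1}-\delta_{n-1})$ or $\ddo(t_n-t_{n-1}-\delta_{n-1})$ gives $\delta_n^{C}=k\,\delta_n^{C_0}$, hence every tentative transition time $t_n+\delta_n^{C}=k\,(t_n/k+\delta_n^{C_0})$ is $k$ times the corresponding time for $C_0$. Therefore the cancellation predicate $t_n+\delta_n\geq t_m+\delta_m$ holds for $(C,s)$ exactly when it holds for $(C_0,s^{(1/k)})$, and in particular $C$ maps $s$ to the zero signal if and only if $C_0$ maps $s^{(1/k)}$ to the zero signal. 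Crucially, contraction scales all pulse lengths by $1/k$ but leaves every duty cycle unchanged, since a duty cycle is a ratio of an up-time to a period.

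It now remains only to choose the scale. Setting $k=\Theta/\Theta_0>1$, I claim the resulting channel $C$ (with parameters $k\tau_0$, $kT_p$, and the same $\overline{V_{th}}$) is the desired one. Indeed, take any admissible input train $s$ for $C$, i.e.\ with pulse lengths $\Theta_n\leq\Theta$ for $n\geq0$ and duty cycles $\Gamma_n\leq\Gamma$ for $n\geq1$. Its contraction $s^{(1/k)}$ then has pulse lengths $\leq\Theta/k=\Theta_0$ and duty cycles $\leq\Gamma$, so by the choice of $\Theta_0$ Lemma~\ref{lem:duty:cycle:delta} guarantees that $C_0$ maps $s^{(1/k)}$ to the zero signal; by the homogeneity correspondence above, $C$ maps $s$ to the zero signal. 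Since $C$ remains a strictly causal exp-channel (each strict inequality $\dup(0)>0$, $\ddo(0)>0$ merely gets multiplied by $k>0$), this establishes the lemma.

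I expect the main obstacle to be the rigorous verification of the homogeneity correspondence, specifically that it is faithful at the level of the \emph{cancellation} bookkeeping and not merely for the individual delay values: one must check that the inductive hypothesis $\delta_n^{C}=k\,\delta_n^{C_0}$ survives the recursive dependence of $\delta_n$ on $\delta_{n-1}$, and that marking a transition as canceled in one run corresponds to marking its image in the other, so that the surviving (non-canceled) output transitions are in exact bijection and the property of being the zero signal transfers. The remaining ingredients, namely the scale-invariance of the duty cycle and the monotone choice of $k$, are routine.
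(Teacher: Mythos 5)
Your proof is correct and takes essentially the same route as the source: the paper itself imports this statement from \cite[Lem.~14]{FNNS14:arxiv}, where the proof likewise fixes a threshold $\overline{V_{th}}\in(\Gamma,1)$ so that Lemma~\ref{lem:duty:cycle:delta} applies, and then rescales the exp-channel parameters $\tau$ and $T_p$ by $\Theta/\Theta_0$, exploiting exactly the homogeneity $\delta^{C}(T)=k\,\delta^{C_0}(T/k)$ together with the fact that time dilation scales pulse lengths but leaves duty cycles invariant. Your explicit induction checking that this homogeneity commutes with the recursive delay computation and the cancellation bookkeeping is the one nontrivial verification, and you have carried it out correctly.
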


By choosing $\Gamma = \gamma(1+\varepsilon)<1$ for some $\varepsilon>0$
sufficiently small and~$\Theta$ so large that the feed-back loop in
Figure~\ref{fig:circuit} has already locked to constant~$1$ at time $T+\Theta$,
where $T$ is the time when some pulse $\Delta_n$, $n\geq 1$, of the feed-back
loop with duty cycle $\gamma(1+\varepsilon)$ has started, we get the following:
If SPF input pulse lengths $\Delta_0$ and adversarial choices are such that no
$\Delta_n$ reaches duty cycle $\gamma(1+\varepsilon)$, the output of the
exp-channel is constant zero; otherwise, there is a single up-transition
(occurring only after $T+\Theta$) at the output.  Therefore:

\begin{theorem}
There is a circuit that solves unbounded SPF.
\end{theorem}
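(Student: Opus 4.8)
The plan is to exhibit exactly the circuit of \figref{fig:circuit} --- the OR gate with initial value $0$ fed back through the \etachannel~$c$ subject to constraint~\eqref{C}, followed by the high-threshold buffer HT --- and to check the four conditions F1--F4 of Definition~\ref{def:spf} for \emph{every} admissible adversarial choice~$H$. Conditions F1 and F2 are immediate: the circuit has one input port $i$ and one output port $o$ by construction, and a zero input keeps the $0$-initialized OR at~$0$, so that both the feed-back channel and HT emit the zero signal. For F3 I would feed in a large pulse with $\Delta_0\geq\delta_\infty^\uparrow+\eta^+$; by the first case of Theorem~\ref{thm:or:loop} the loop locks to constant~$1$ irrespective of~$H$, and the strictly causal exp-channel HT then eventually crosses its threshold, so the output is not the zero signal.

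The real content is F4, which I would secure by dimensioning HT via Lemma~\ref{lem:ht:exists}. First fix $\varepsilon>0$ so small that $\Gamma:=\gamma(1+\varepsilon)<1$, where $\gamma$ is the duty-cycle bound of Lemma~\ref{lem:dutycycle}. Then choose $\Theta$ to meet two requirements at once: (i) $\Theta$ exceeds every pulse length that can occur at the OR output in a non-locking execution --- the first pulse, bounded by $\delta_\infty^\uparrow+\eta^+$, and all later pulses, bounded by $\Delta<\deltamin$; and (ii) $\Theta$ is larger than the \emph{residual} time from the instant a feed-back pulse first reaches duty cycle $\Gamma$ until the loop has locked to~$1$. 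Lemma~\ref{lem:ht:exists} then supplies an exp-channel HT mapping every pulse train with pulse lengths $\le\Theta$ and duty cycles $\le\Gamma$ (for $n\ge1$) to the zero signal.

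With HT so dimensioned I would finish by a case split on the OR output provided by Theorem~\ref{thm:or:loop}. If the loop resolves to constant~$0$ or produces an infinite pulse train, all its pulses satisfy $\Delta_n\le\Delta<\Theta$ and, for $n\ge1$, $\gamma_n\le\gamma<\Gamma$ (the first pulse being covered by the length bound alone, since Lemma~\ref{lem:ht:exists} constrains duty cycles only for $n\ge1$), so HT outputs the constant zero signal. If instead the loop locks to~$1$ --- either directly via a large pulse or after a finite, strongly increasing train by Lemmas~\ref{lem:exponential:decay} and~\ref{lem:initinput} --- then by choice of $\Theta$ the HT input is already constant~$1$ before HT can complete any transition, so the output performs a single up-transition and then stays~$1$. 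In every case the output is either constant~$0$ or a single step to~$1$; it therefore contains no pulse whatsoever, and the no-short-pulses condition F4 holds trivially.

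I expect the delicate point to be requirement~(ii). The stabilization time measured from time~$0$ is \emph{not} uniformly bounded, since Lemma~\ref{lem:initinput} makes it blow up like $\log_a(1/(\Delta_0-\tilde{\Delta}_0))$ as $\Delta_0\downarrow\tilde{\Delta}_0$; this is why the dimensioning must measure $\Theta$ from the instant~$T$ at which a pulse first attains duty cycle $\gamma(1+\varepsilon)$, not from time~$0$. One must then argue that the time from $T$ to the lock \emph{is} bounded uniformly in $\Delta_0$ and $H$: crossing the trip wire $\gamma(1+\varepsilon)>\gamma$ certifies via Lemma~\ref{lem:dutycycle} that the train is not infinite, pins the current up-time to some fixed $\Delta^*>\Delta$, and the exponential growth of Lemma~\ref{lem:exponential:decay} drives $\Delta^*$ to a lock in a bounded number of further pulses. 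The slack $\varepsilon$ is precisely what separates the filtered infinite trains ($\gamma_n\le\gamma$) from the passed locking trains, and, in re-using the deterministic Lemma~\ref{lem:ht:exists}, it additionally leaves room to absorb the HT channel's own $\boeta$-perturbations.
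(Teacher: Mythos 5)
Your proposal is correct and takes essentially the same route as the paper's proof: the same circuit from \figref{fig:circuit}, the same dimensioning $\Gamma=\gamma(1+\varepsilon)<1$ with $\Theta$ measured from the instant $T$ at which a feed-back pulse first attains duty cycle $\Gamma$, and the same case split based on Theorem~\ref{thm:or:loop} together with Lemmas~\ref{lem:mid:pulse}, \ref{lem:dutycycle}, \ref{lem:exponential:decay}, \ref{lem:initinput}, and~\ref{lem:ht:exists}. If anything, your treatment of the delicate point --- that the residual time from the duty-cycle trip wire to lock-in is bounded uniformly in $\Delta_0$ and $H$, and that the $\varepsilon$-slack also absorbs the HT channel's own perturbations --- is spelled out slightly more explicitly than in the paper, which simply asserts that $\Theta$ can be chosen this large.
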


\ifthenelse{\boolean{SHORTversion}}{}{
% ----- LONG
\begin{proof}
  If $\Delta_0 < \delta_\infty^\uparrow - \deltamin - \eta^+ - \eta^-$,
  Theorem~\ref{thm:or:loop} ensures that the input of the high-threshold buffer
  is constant~0, and so is the output.  If
  $\Delta_0 > \delta_\infty^\uparrow + \eta^+$, then the input of the
  high-threshold buffer experiences a single up-transition (at time 0), and so
  does the output (eventually).

  For $\Delta_0$ in between, we distinguish two cases: (i) Suppose $\Delta_0$
  and the adversarial choices are such that no $\Delta_n$ ever reaches duty
  cycle $\gamma(1+\varepsilon)$. Then, the minimality of the period $P$ of the
  worst-case pulse train guaranteed by Lemma~\ref{lem:mid:pulse} implies that
  the input of the high-threshold buffer sees pulses with duration at most
  $\Theta$ and duty cycle at most $\Gamma$. Hence, Lemma~\ref{lem:ht:exists}
  guarantees a zero-output in this case.

  For the other case (ii), which is guaranteed to happen when
  $\Delta_0 > \tilde{\Delta}_0$ (but may also occur for smaller values of
  $\Delta_0$ in the case of certain adversarial choices), there is some time $T$
  where a 1-pulse $\Theta_n$ starts at the input of the exp-channel that will
  (along with its subsequent 0) have a duty cycle
  $\Gamma_n\geq\Gamma>\gamma$. Moreover, by time $T+\Theta$, the last input
  transition (to 1) has already occurred. Lemma~\ref{lem:ht:exists} not only
  guarantees that all pulses occurring before $T$ cancel, but also the ones that
  occur before time $T+\Theta$: after all, even a single, long pulse
  $\Theta_n=\Theta$ would still be canceled. Therefore, since the input of the
  exp-channel is already stable at 1 at time $T+\Theta$, only this final rising
  transition will eventually appear at the output.
\end{proof}
}

\section{Simulations}
\label{sec:simulations}

In this section, we complement the proof of faithfulness provided in
the previous section with simulation experiments and measurement
results, which confirm that our \etamodel\ indeed captures reality
better than the original involution model \cite{NFNS15:GLSVLSI}.
Whereas more experiments, with different technologies and more 
complex circuits (including multi-input gates), would be needed 
to actually claim improved model coverage, our results are nevertheless
encouraging.

\def\tikzmess{
\begin{tikzpicture}[circuit logic US, scale=0.8]

% lower rectangle
\draw[rounded corners=3pt, dotted, fill=black!20]
       (-0.4,-0.8) rectangle (6.7,0.4);
\node at (3.3,-1.1) {inverter chain};

% upper rectangle
\draw[rounded corners=3pt, dotted, fill=black!10]
       (0.2,0.5) rectangle (7.7,1.6);
\node [align=center] at (0,2.2) {on-chip sense\\amplifiers};
%\draw[->] (-0.1,1.7) -- ++(0.5,-0.5);

% inverter chain
\node [not gate,small circuit symbols] (not1) at (0,0) {};
\node [not gate,small circuit symbols] (not2) at (1.2,0) {};
\node [not gate,small circuit symbols] (not3) at (2.4,0) {};
\node [not gate,small circuit symbols] (not4) at (3.6,0) {};
\node [not gate,small circuit symbols] (not5) at (4.8,0) {};
\node [not gate,small circuit symbols] (not6) at (6.0,0) {};
\node [not gate,small circuit symbols] (not7) at (7.5,0) {};

\node at (7.7,-0.5) {load};

% input
\draw (not1.input)
       -- ++(left:0.4) node [yshift=-4pt,xshift=-4pt] {in};

% output
\draw (not7.output)
       -- ++(right:0.3);

% to scope
\path (not1.output)
        -- ++(right:0.2) 
        -- ++(up:1) node [buffer gate,small circuit symbols,rotate=90] (nnot1) {};
	;
\path (not2.output)
        -- ++(right:0.2) 
        -- ++(up:1) node [buffer gate,small circuit symbols,rotate=90] (nnot2) {};
	;
\path (not3.output)
        -- ++(right:0.2) 
        -- ++(up:1) node [buffer gate,small circuit symbols,rotate=90] (nnot3) {};
	;
\path (not4.output)
        -- ++(right:0.2) 
        -- ++(up:1) node [buffer gate,small circuit symbols,rotate=90] (nnot4) {};
	;
\path (not5.output)
        -- ++(right:0.2) 
        -- ++(up:1) node [buffer gate,small circuit symbols,rotate=90] (nnot5) {};
	;
\path (not6.output)
        -- ++(right:0.5) 
        -- ++(up:1) node [buffer gate,small circuit symbols,rotate=90] (nnot6) {};
	;

% scope outputs
\foreach \x in {nnot1,nnot2,nnot3,nnot4,nnot5,nnot6} {
   \draw (\x.output) -- ++(up:0.3);
}
\node at (4,2) {to real-time oscilloscope};

\draw [name path=line1] (not1.output) -- (not2.input);
\path [name path=line2] (nnot1.input) -- ++(down:2);
\path [name intersections={of=line1 and line2,by=c1}];
\draw (nnot1.input) -- (c1) node[circle,inner sep=1pt,fill=black,draw] {};

\draw [name path=line1] (not2.output) -- (not3.input);
\path [name path=line2] (nnot2.input) -- ++(down:2);
\path [name intersections={of=line1 and line2,by=c2}];
\draw (nnot2.input) -- (c2) node[circle,inner sep=1pt,fill=black,draw] {};

\draw [name path=line1] (not3.output) -- (not4.input);
\path [name path=line2] (nnot3.input) -- ++(down:2);
\path [name intersections={of=line1 and line2,by=c3}];
\draw (nnot3.input) -- (c3) node[circle,inner sep=1pt,fill=black,draw] {};

\draw [name path=line1] (not4.output) -- (not5.input);
\path [name path=line2] (nnot4.input) -- ++(down:2);
\path [name intersections={of=line1 and line2,by=c4}];
\draw (nnot4.input) -- (c4) node[circle,inner sep=1pt,fill=black,draw] {};

\draw [name path=line1] (not5.output) -- (not6.input);
\path [name path=line2] (nnot5.input) -- ++(down:2);
\path [name intersections={of=line1 and line2,by=c5}];
\draw (nnot5.input) -- (c5) node[circle,inner sep=1pt,fill=black,draw] {};

\draw [name path=line1] (not6.output) -- (not7.input);
\path [name path=line2] (nnot6.input) -- ++(down:2);
\path [name intersections={of=line1 and line2,by=c6}];
\draw (nnot6.input) -- (c6) node[circle,inner sep=1pt,fill=black,draw] {};

% port names
\foreach \x/\y in {c1/1,c2/2,c3/3,c4/4,c5/5,c6/6} {
   \node[yshift=-9pt,xshift=-1pt] at (\x) {$Q_{\y}$};
}

%\draw (not1.output)
%       -- ++(right:0.4) node [not gate,small circuit symbols] (not2) {}
%       ;

%\draw (not.output)
%        -- ++(left:0.8) %node[circle,inner sep=1pt,fill=black,draw] (huhu) {}
%	-- ++(up:0.1)
%	|- (nor.input 2)
%	;

% arrow
\draw[->] (-0.1,1.7) -- ++(0.5,-0.3);

\end{tikzpicture}
}

\begin{figure}
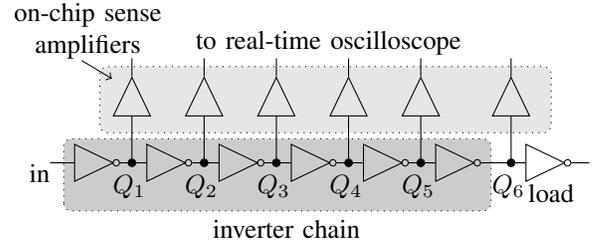

  \centerline{
    \tikzmess
  }
  \vspace{-0.2cm}
  \caption{\small\em Schematics of the ASIC used for validation measurements.
    It combines an inverter chain with analog high-speed sense amplifiers.}
  \label{fig:asic}
  \vspace{-0.5cm}
\end{figure}

We employ the same experimental setup as in \cite{NFNS15:GLSVLSI}, which uses
UMC-90\;nm and UMC-65\;nm bulk CMOS 7-stage inverter chains as the primary
targets. For UMC-65, we resorted to Spice simulations of a standard cell library
implementation, for UMC-90, we relied on a custom ASIC \cite{HSDZ12:ToNS}.  The
latter provides a 7-stage inverter chain built from 700\;nm x 80\;nm (W x L)
pMOS and 360\;nm x 80\;nm nMOS transistors, with threshold voltages 0.29\;V and
0.26\;V, respectively, and a nominal supply voltage of $V_{DD}=1$\;V.  As all
inverter outputs are connected to on-chip low-intrusive high-speed analog sense
amplifiers (gain 0.15, -3~dB cutoff frequency 8.5~GHz, input load equivalent to
3 inverter inputs), see Fig.~\ref{fig:asic}, which can directly drive the
50\;$\Omega$ input of a high-speed real-time oscilloscope, the ASIC facilitates
the faithful analog recording of all signal waveforms. Independent power
supplies and grounds for inverters and amplifiers also facilitate measurements
with different digital supply voltages $V_{DD}$.

For convenience, we provide the delay functions determined in
\cite{NFNS15:GLSVLSI} in Fig.~\ref{fig:delta90_all} ($\ddo$ for UMC-90,
measurements).

\begin{figure}[t]
  \centerline{
    \includegraphics[scale=0.8]{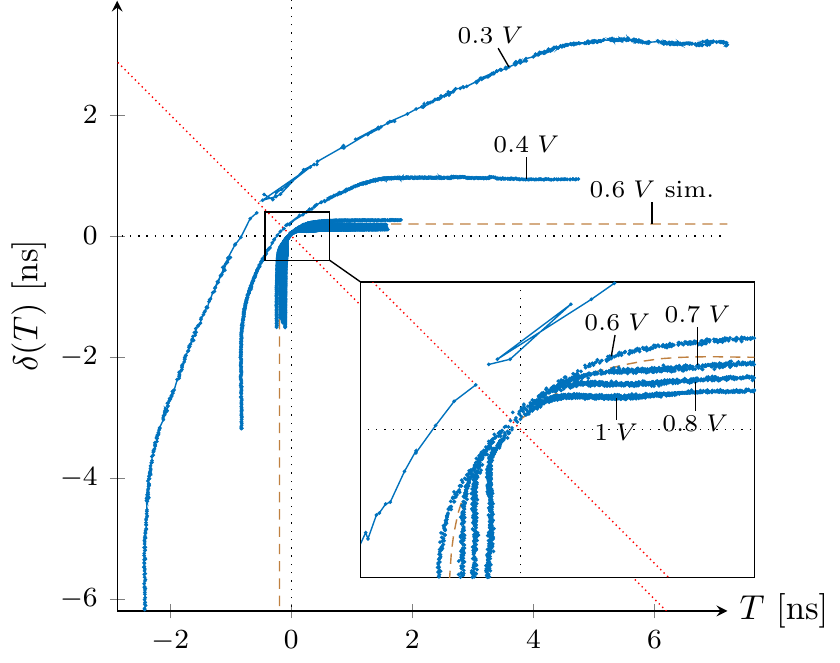}}
  \vspace{-0.1cm}
  \caption{\small\em Measured $\ddo$ for UMC-90 inverter chain for~$V_{DD} \in \{0.3, 0.4, 0.6, 0.7, 0.8, 1\}$\;V and simulated (dashed brown) $\ddo$ for $V_{DD} = 0.6$\;V, taken from  \cite[Fig.~7]{NFNS15:GLSVLSI}.
}
  \label{fig:delta90_all}
  \vspace{-0.5cm}
\end{figure}

In order to validate the \etamodel, we use the following general approach: Given
simulated/measured output waveforms of a single inverter excited by input pulses
of different width, we compare (i) the digital output obtained from the
simulated/measured waveforms with (ii) the predictions for some given delay
function.  The differences of the transition times of predicted and real digital
output is a measure of modeling inaccuracy of the original involution model.  If
these differences can be compensated by suitable output shifts within
$[\eta^-,\eta^+]$, however, we can claim that the \etamodel\ matches the real
behavior of the circuit for the given waveforms. Since faithfulness puts the
severe constraint $\eta^+ + \eta^- < \ddo(-\eta^+) - \deltamin$ on
$\eta^+,\eta^-$, \ifthenelse{\boolean{SHORTversion}}{}{recall
  Lemma~\ref{lem:mid:pulse},} it is not clear under which conditions this claim
indeed holds. In our evaluation, $\eta^+$ was first set to a suitable value
($\eta^+ > 0$) and afterwards $\eta^-$ was calculated according to
$\eta^-=\ddo(-\eta^+)-\deltamin-\eta^+$.  Clearly, this results in different
$\boeta$ bounds in each of the figures below.

The particular questions addressed in our experiments are the following: Is the
allowed range for~$\eta^+$ and~$\eta^-$ sufficient for the \etamodel\ to
capture: (a) The circuit behavior under variations of certain operating
conditions. After all, circuit delays change with varying supply voltage and
temperature, so the question remains to what extent the resulting fluctuations
are covered by the \etamodel.  (b) The circuit behavior under process
variations. In general, circuit delays vary from manufactured chip to chip, so
the question arises whether the \etamodel\ based on a ``typical'' delay function
covers typical process variations.  (c) The real behavior of our inverter chain
with a (suitably parametrized) standard involution function, in particular for
exp-channels.  This would simplify model calibration, as it is typically easier
to determine the exp-channel model parameters for a given
circuit~\cite{BDJCAVH00}, rather than its entire delay function.

\begin{figure*}
  \centering
  \begin{subfigure}{.32\linewidth}
    \begin{tikzpicture}[font=\footnotesize]
      \begin{axis}[
        width=0.95\linewidth,
        height=0.8\linewidth,
        xmin=-1e-11,
        xmax=1.3e-10,
        ymin=-6e-13,
        ymax=6e-13,
        xlabel={previous-output-to-input delay (T) [ps]},
        xtick={0,5e-11,1e-10},
        xticklabels={0,50,100},
        xtick scale label code/.code={},
        ylabel={deviation (D) [ps]},
        ytick={-4e-13,-2e-13,0,2e-13,4e-13},
        yticklabels={-0.4,-0.2,0,0.2,0.4},
        ytick scale label code/.code={},
        grid=major,
        grid style = {dashed, gray!50},
        legend entries={$\ddo$, $\dup$, $\eta$},
        legend cell align=right,
        legend pos = north east,
        legend style={at={(1,0.01)},anchor={south east},outer sep=1mm}
        ]
        \addplot [only marks, mark=*, color=Spectral-PRG-1]
        table[x , y]{inv_t4_noCL_vddVar1_EtaPlotDown.dat};
        \addplot [only marks, mark=diamond*, color=Spectral-PRG-3]
        table[x , y]{inv_t4_noCL_vddVar1_EtaPlotUp.dat};
        \addplot [color=Spectral-PRG-4, very thick] coordinates {(-1e-11,3e-13) (1.3e-10,3e-13)} ;
        \addplot [color=Spectral-PRG-4, very thick] coordinates {(-1e-11,-3.859649e-13) (1.3e-10,-3.859649e-13)} ;
      \end{axis}%
    \end{tikzpicture}%
    \caption{\small\em Power supply variations of $1$ \%.}
    \label{fig:vddVar}
  \end{subfigure}
  \begin{subfigure}{.32\linewidth}
    \begin{tikzpicture}[font=\footnotesize]
      \begin{axis}[
        width=0.95\linewidth,
        height=0.8\linewidth,
        xmin=-1e-11,
        xmax=1.3e-10,
        ymin=-1e-12,
        ymax=2e-13,
        xlabel={previous-output-to-input delay (T) [ps]},
        xtick={0,5e-11,1e-10},
        xticklabels={0,50,100},
        xtick scale label code/.code={},
        ylabel={deviation (D) [ps]},
        ytick scale label code/.code={},
        grid=major,
        grid style = {dashed, gray!50},
        legend entries={$\ddo$, $\dup$, $\eta$},
        legend cell align=right,
        legend pos = north east,
        legend style={at={(1,0.99)},anchor={north east},outer sep=1mm}
        ]
        \addplot [only marks, mark=*, color=Spectral-PRG-1]
        table[x , y]{highWidth10_EtaPlotDown.dat};
        \addplot [only marks, mark=diamond*, color=Spectral-PRG-3]
        table[x , y]{highWidth10_EtaPlotUp.dat};
        \addplot [color=Spectral-PRG-4, very thick] coordinates {(-1e-11,1e-13) (1.4e-10,1e-13)} ;
        \addplot [color=Spectral-PRG-4, very thick] coordinates {(-1e-11,-8.370927e-13) (1.4e-10,-8.370927e-13)} ;
      \end{axis}%
    \end{tikzpicture}%
    \caption{\small\em Transistor width increase of $10$ \%.}
    \label{fig:transWidthHigh}
  \end{subfigure}
  \begin{subfigure}{.32\linewidth}
    \begin{tikzpicture}[font=\footnotesize]
      \begin{axis}[
        width=0.95\linewidth,
        height=0.8\linewidth,
        xmin=-1e-11,
        xmax=1.3e-10,
        ymin=-1e-13,
        ymax=7e-13,
        xlabel={previous-output-to-input delay (T) [ps]},
        xtick={0,5e-11,1e-10},
        xticklabels={0,50,100},
        xtick scale label code/.code={},
        ylabel={deviation (D) [ps]},
        ytick={0,2e-13,4e-13,6e-13},
        yticklabels={0,0.2,0.4,0.6},
        ytick scale label code/.code={},
        grid=major,
        grid style = {dashed, gray!50},
        legend entries={$\ddo$, $\dup$, $\eta$},
        legend cell align=right,
        legend pos = north east,
        legend style={at={(1,0.1)},anchor={south east},outer sep=1mm}
        ]
        \addplot [only marks, mark=*, color=Spectral-PRG-1]
        table[x , y]{inv_t4_noCL_lowWidth10_EtaPlotDown.dat};
        \addplot [only marks, mark=diamond*, color=Spectral-PRG-3]
        table[x , y]{inv_t4_noCL_lowWidth10_EtaPlotUp.dat};
        \addplot [color=Spectral-PRG-4, very thick] coordinates {(-1e-11,4.777777e-13) (1.3e-10,4.777777e-13)} ;
        \addplot [color=Spectral-PRG-4, very thick] coordinates {(-1e-11,-4.511278e-14) (1.3e-10,-4.511278e-14)} ;
      \end{axis}%
    \end{tikzpicture}%
    \caption{\small\em Transistor width reduction of $10$ \%.}
    \label{fig:transWidthLow}
  \end{subfigure}
  \caption{\small\em Deviation between predicted and actual $V_{TH}$ crossings for
    different variations.}
  \vspace{-0.5cm}
\end{figure*}
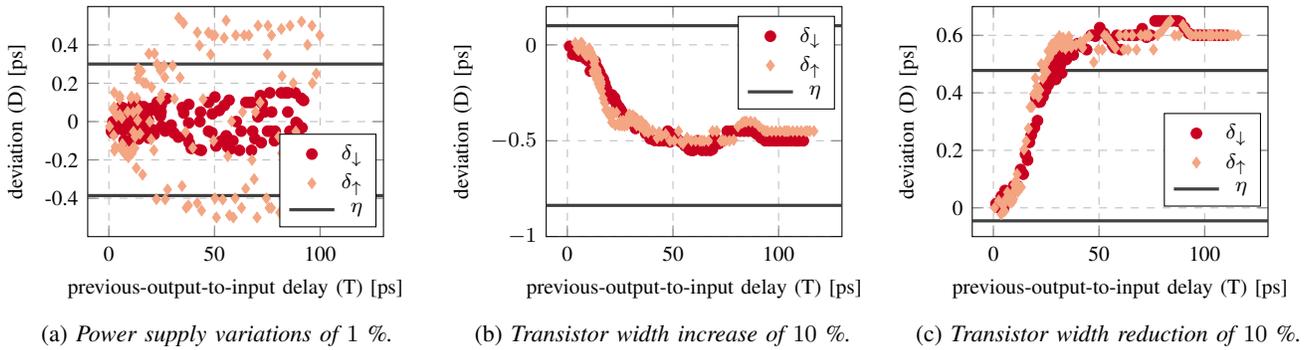

To investigate question~(a), i.e., the robustness against voltage variations, we
added a sine wave to the voltage supply source (nominally $1.2$ V $=V_{DD}$)
with a period similar to the full range switching time of the inverter and a
magnitude of $0.012$ V ($1$ \% of $V_{DD}$). We applied pulses with differing
width to the input of the inverter and recorded the output, whereat the phase of
the sine wave was set for each pulse randomly between $0$ and $360$ degrees. In
Fig.~\ref{fig:vddVar}, the deviation $D$ between the prediction and the actual
crossing over the previous-output-to-input delay $T$ is shown. Despite the
stringent bounds on $\boeta$, it is possible to fully cover the resulting delay
variations for low $T$, for higher values however, the \etamodel\ does no longer
apply. Please note that the huge difference between $\ddo$ and $\dup$ can be
easily explained by the fact that $\dup$ results in a falling transition at the
output of the inverter. For this transition, the transistor connecting the
output to the power supply gets closed more and more, reducing also the impact
of the voltage variations. (When varying the ground level, the reverse case can
be observed.)

To answer question~(b), we chose to vary the transistor width, which
increases/decreases the maximum current and allows us to model variations of
resistance and capacitance as well. The simulations themselves were carried out
in the same fashion as described in the last paragraph, except that
$V_{DD}=1.2$~V was constant. Fig.~\ref{fig:transWidthHigh} shows the results for
$10$ \% wider transistors, where the $\boeta$-bound is even bigger than
required. In contrast, the deviations for $10$ \% narrower ones
(Fig.~\ref{fig:transWidthLow}) exceed the $\boeta$-bound with increasing values
of $T$. Unlike $V_{DD}$ variations, varying transistor sizes, as expected,
either increases or decreases the delay.  This can be seen very clearly in the
figures, as one trace is well below and one well above $D=0$.

For question~(c), we tried to fit the parameters of the involution 
function (\ref{eq:exp}) for exp-channels w.r.t.\ the measurement data 
published in \cite{NFNS15:GLSVLSI} and evaluated the deviations $D$
between the resulting model predictions and the real digital 
output. Whereas the deviations over the whole range of $T$ exceed
the feasible $\boeta$-bounds, one can observe that even this very simple
exp-channel only results in minor mispredictions near $T=0$.
As shown in Fig.~\ref{fig:expChannelFit}, it again turns out that,
when using the resulting involution function, excessive deviations 
occur (quite naturally) for large values of $T$ only.

\begin{figure}
%  \centerline{
%   \includegraphics[width=0.7\linewidth,trim={0 0 0 0.5cm},clip]{figures/V08expChannelFit.pdf}%
%  }
  \centering
  \begin{tikzpicture}[font=\footnotesize]
    \begin{axis}[
      width=0.8\linewidth,
      height=0.55\linewidth,
      xmin=-1e-10,
      xmax=1.7e-9,
      ymin=-0.82e-10,
      ymax=2e-11,
      xlabel={previous-output-to-input delay (T) [ns]},
      xtick scale label code/.code={},
      ylabel={deviation (D) [ps]},
      ytick={0,-2e-11,-4e-11,-6e-11,-8e-11},
      yticklabels={0,-20,-40,-60,-80},
      ytick scale label code/.code={},
      grid=major,
      grid style = {dashed, gray!50},
      legend entries={$\ddo$, $\dup$, $\eta$},
      legend cell align=right,
      legend pos = north east,
      legend style={at={(1,0.99)},anchor={north east},outer sep=1mm}
      ]
      \addplot [only marks, mark=*, mark size=2pt, color=Spectral-PRG-1]
      table[x , y]{V08EtaPlotDown.dat};
      \addplot [only marks, mark=diamond*, mark size=2pt, color=Spectral-PRG-3]
      table[x , y]{V08EtaPlotUp.dat};
      \addplot [color=Spectral-PRG-4, very thick] coordinates {(-1e-10,7.2e-12) (1.9e-9,7.2e-12)} ;
      \addplot [color=Spectral-PRG-4, very thick] coordinates {(-1e-10,-1.254888e-11) (1.9e-9,-1.254888e-11)} ;
    \end{axis}%
  \end{tikzpicture}%
  \vspace{-0.2cm}
  \caption{\small\em Fitting an exp-channel involution to measured data.}
  \label{fig:expChannelFit}
\vspace{-0.5cm}
\end{figure}
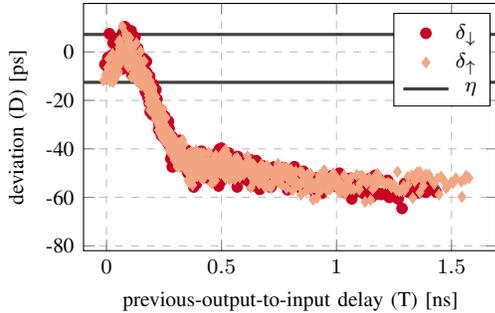

We hence conclude that the \etamodel\ indeed improves the modeling
accuracy of the original involution model, despite the fact that
the allowed non-determinism, i.e., $\boeta$, is quite restricted.
Moreover, our simulation experiments indicate that the absolute deviations $|D|$ between 
model predictions and real traces is increasing with increasing
previous-output-to-input delay $T$, making it possible to fully compensate $D$
via $\boeta$ near $T=0$. This is crucial, as our $\boeta$-bounds
result from proving faithfulness, which involves the range $T \in
[-\deltamin,0]$ only. For larger $T$, $D$ grows bigger, but in this
region, it might be feasible to also increase the allowed non-determinism
as these values are almost irrelevant w.r.t. faithfulness. 

\section{Conclusions and Future Work}
\label{sec:conclusions}

We proved the surprising fact that adding non-determinism
to the delays of involution channels, the only delay model known
so far that is faithful for the SPF problem, does not invalidate
faithfulness. As confirmed by some simulation experiments and
even measurements, noise, varying operating conditions and process
parameter variations hence do not a priori rule out faithful
continuous-time, binary value models. Part of our future work
will be devoted to further increase the level of non-determinism sustained
by our model, the handling of more complex circuits, and the first steps 
for incorporating the \etamodel\ in a suitable formal verification tool.

\end{document}